\newcommand{\backgroundcontour}[1]{%
\textpdfrender{
	TextRenderingMode=FillStrokeClip,
	LineWidth=1.75pt,
	FillColor=white,
	StrokeColor=white,
	MiterLimit=1
}{#1}%
}
\tikzset{
		invisible/.style = {minimum size=0pt, inner sep=0pt}
}
\newcommand{\stripedunderline}{%
\tikz\node[inner sep=2pt,minimum height=.75ex,minimum width=2ex,rectangle,overlay, shift={(-.5ex,-1.25ex)},pattern={Lines[angle=45,distance=2pt,line width=1pt]},pattern color=black!30]{};%
}
\newcommand{\solidunderline}{%
\tikz\node[inner sep=2pt,minimum height=.75ex,minimum width=2ex,rectangle,overlay, shift={(-.5ex,-1.25ex)},fill=black!30]{};%
}
\definecolor{colorA}{HTML}{B31B1B}
\definecolor{colorAdark}{HTML}{9B1717}
\definecolor{colorB}{HTML}{1BB3B3}
\definecolor{colorBdark}{HTML}{0F6161}
\definecolor{colorC}{HTML}{B38D1B}
\definecolor{colorCdark}{HTML}{6D5612}
\definecolor{colorD}{HTML}{1B3AB3}
\definecolor{colorDdark}{HTML}{152C8A}
\definecolor{colorE}{HTML}{AFB31B}
\definecolor{colorEdark}{HTML}{575A0C}
\definecolor{colorF}{HTML}{B31B80}
\definecolor{colorFdark}{HTML}{7C1359}
\crefname{algocf}{alg.}{algs.}
\Crefname{algocf}{Algorithm}{Algorithms}
\newtheorem{theorem}{Theorem}
\crefname{theorem}{theorem}{theorems}
\Crefname{theorem}{Theorem}{Theorems}
\newtheorem{corollary}{Corollary}
\crefname{corollary}{corollary}{corollaries}
\Crefname{corollary}{Corollary}{Corollaries}
\crefname{assumption}{assumption}{assumptions}
\Crefname{assumption}{Assumption}{Assumptions}
\theoremstyle{definition}
\newtheorem{lemma}{Lemma}
\crefname{lemma}{lemma}{lemmas}
\Crefname{lemma}{Lemma}{Lemmas}
\newtheorem{definition}{Definition}
\crefname{definition}{definition}{definitions}
\Crefname{definition}{Definition}{Definitions}
\newtheorem{observation}{Observation}
\crefname{observation}{observation}{observations}
\Crefname{observation}{Observation}{Observations}
\newtheorem{problem}{Problem}
\crefname{problem}{problem}{problems}
\Crefname{problem}{Problem}{Problems}
\theoremstyle{remark}
\newcommand{\qbeg}{q_{\text{s}}}
\newcommand{\qend}{q_{\text{e}}}
\newcommand{\tbeg}{t_{\text{s}}}
\newcommand{\tend}{t_{\text{e}}}
\newcommand{\wprec}{\prec_\mathrm{w}}
\newcommand{\chainxprec}{\prec_{\mathtt{ChainX}}}
\newcommand{\startanchor}{a_\mathrm{start}}
\newcommand{\finalanchor}{a_\mathrm{end}}
\DeclareMathOperator{\gscost}{\mathsf{gcost}}
\title{Practical colinear chaining on sequences revisited}
\author[1]{Nicola Rizzo}
\author[2]{Manuel C{\'a}ceres}
\author[3]{Veli M{\"a}kinen}
\affil[1]{Department of Computer Science, University of Helsinki, Finland, \texttt{nicola.rizzo@helsinki.fi} \orcidlink{0000-0002-2035-6309}}
\affil[2]{Department of Computer Science, Aalto University, Finland, \texttt{manuel.caceres@aalto.fi} \orcidlink{0000-0003-0235-6951}}
\affil[3]{Department of Computer Science, University of Helsinki, Finland, \texttt{veli.makinen@helsinki.fi} \orcidlink{0000-0003-4454-1493}}
\date{}
\DeclareMathOperator{\connect}{connect}
\DeclareMathOperator{\maxgap}{g}
\DeclareMathOperator{\diffoverlap}{o}
\DeclareMathOperator{\diag}{diag}
\DeclareMathOperator{\bucket}{bucket}
\begin{document}

\maketitle
\begin{abstract}
Colinear chaining is a classical heuristic for sequence alignment and is widely used in modern practical aligners.
Jain et al. (J.\ Comput.\ Biol.\ 2022) proposed an $O(n \log^3 n)$ time algorithm to chain a set of $n$ anchors so that the chaining cost matches the edit distance of the input sequences, when anchors are all the maximal exact matches. Moreover, assuming a uniform and sparse distribution of anchors, they provided a practical solution (\texttt{\upshape ChainX}) working in $O(n \cdot \mathrm{SOL} + n \log n)$ average-case time, where $\mathrm{SOL}$ is the cost of the output chain. 
This practical solution is not guaranteed to be optimal: we study the failing cases, introduce the \emph{anchor diagonal distance}, and find and implement an optimal algorithm working in $O(n \cdot \mathrm{OPT} + n \log n)$ average-case time, where $\mathrm{OPT}$ $\le \mathrm{SOL}$ is the optimal chaining cost. 
We validate the results by Jain et al., show that \texttt{\upshape ChainX} can be suboptimal with a realistic long read dataset, and show minimal computational slowdown for our solution.
\end{abstract}

\paragraph{Keywords:} {sequence alignment, seed-chain-extend, sparse dynamic programming}

\tableofcontents

\section{Introduction}

Colinear chaining is a popular technique to approximate the alignment of two sequences \cite{MM95,SK03,AO05,MS20,JainGT22}. It is used as one critical step in practical aligners like \texttt{minimap2} \cite{Li18} and \texttt{nucmer4} \cite{Marcais2018-hr}, and also speeds up exact alignment tools like \texttt{A*PA2}~\cite{koerkamp2022exact}. The main idea of the approach is to first identify short common parts between two sequences (seeds) as alignment anchors, and then to select a subset of these anchors that forms a linear ordering of seeds simultaneously on both sequences (colinear chain). A colinear chain can be converted into an alignment, but the approach is heuristic unless anchors, overlaps, and gaps are treated properly: namely, M\"akinen and Sahlin \cite{MS20}, while revisiting the work of Shibuya and Kurochkin \cite{SK03}, proposed a chaining formulation solvable in $O(n \log n)$ time (using sparse dynamic programming) considering both anchor overlaps and gaps between $n$ anchors so that the chaining score equals the longest common subsequence (LCS) length when all maximal exact matches (MEMs) are used as anchors. This connection requires that the MEMs are 
not filtered by length, a common optimization used by practical tools:
there are a quadratic number of MEM anchors on most pairs of sequences, so this approach does not directly yield an improvement over the classical dynamic programming approach \cite{Kuc19} nor violates the conditional lower bound for LCS~\cite{ABVW15}.
However, when a subset of MEMs is selected, for example by imposing uniqueness (see e.g.~\cite{Marcais2018-hr}) or a minimum length threshold, or even when the input is an arbitrary set of exact matches, the chaining formulation yields a non-trivial connection to a so-called \emph{anchored} version of LCS \cite{MS20}.

Recently, Jain et al.\ \cite{JainGT22} extended the colinear chaining framework by considering both overlap and gap costs, obtained analogous results to those of M\"akinen and Sahlin, and connected chaining to the classical Levenshtein distance (unit-cost edit distance), that---similarly to LCS---is also unlikely to be solved exactly in subquadratic time \cite{BK15}.
Jain et al.\ solved such chaining formulation in $O(n \log^{3} n)$ time. Moreover, the authors provided a practical solution working in $O(n \cdot \mathrm{SOL} + n \log n)$ average-case time, assuming a uniform and sparse distribution of anchors, where $\mathrm{SOL}$ is the cost of the output chain.
Such solution was implemented in tool \texttt{\upshape ChainX}~\cite{ChainX}.
We show that the \texttt{\upshape ChainX} solution, although verified experimentally, is not guaranteed to be optimal.

We start by introducing the problem in \Cref{sec:preliminaries}; we reformulate the \texttt{\upshape ChainX} algorithm and show the failing cases in \Cref{sec:pitfalls}; in \Cref{sec:diagonal} we introduce the \emph{anchor diagonal distance} and complete the strategy of \texttt{\upshape ChainX} to find a provably correct solution in the same $O(n \cdot \mathrm{OPT} + n \log n)$ average-case time, where $\mathrm{OPT} \le \mathrm{SOL}$ is the optimal chaining cost.
Finally, in \Cref{sec:experiments}, we verify the experimental results of Jain et al.~\cite{JainGT22}, perform tests on a realistic human dataset, and show that using maximal unique match (MUM) anchors our solution improves the chaining cost of approximately 2000 of 100k sampled long reads against a human reference.

\section{Preliminaries}\label{sec:preliminaries}
We denote integer interval $\lbrace x, x + 1, \dots, y \rbrace$ as $[x..y]$, or just $[x]$ if $x = y$.
Given a finite alphabet $\Sigma$ ($\Sigma = \lbrace \mathtt{A}, \mathtt{C}, \mathtt{G}, \mathtt{T} \rbrace$ for all examples of this paper), let $T$, $Q$ be two strings over $\Sigma$ of length $\lvert T \rvert$ and $\lvert Q \rvert$, respectively.
We indicate with $T[x..y]$ the string obtained by concatenating the characters of $T$ from the $x$-th to the $y$-th (strings are thus 1-indexed), and we call $T[x..y]$ a \emph{substring} of $T$.
We say that interval pair $a = ([\qbeg..\qend], [\tbeg..\tend])$ is an \emph{exact match anchor}, or just anchor, between $Q$ and $T$ if $Q[\qbeg..\qend] = T[\tbeg..\tend]$, with $\qbeg,\qend \in [1..\lvert Q \rvert]$, $\tbeg,\tend \in [1..\lvert T \rvert]$, $\qbeg \le \qend$, and $\tbeg \le \tend$.
Moreover, we use $a$ to denote a $k$-length anchor with $k = \qend - \qbeg + 1$.
See \Cref{fig:anchors}.

\begin{figure}[htp]
    \centering
\begin{tikzpicture}
	\tikzset{
		invisible/.style = {minimum size=0pt, inner sep=0pt}
	}

\matrix[
	matrix of nodes,
	minimum size=2mm,
	column sep={3.5mm,between origins},
	row sep={3mm,between origins},
	inner sep=0pt,
	nodes={anchor=base},
	anchor=west,
	font=\ttfamily,
] (Q) {
G&A&A&A&T&G&G&T&C&A&T&G&T&G&T&G&G&C&G&G&T&T&C&A&C&T\\
};

\matrix[
	matrix of nodes,
	minimum size=2mm,
	column sep={3.5mm,between origins},
	row sep={3mm,between origins},
	inner sep=0pt,
	nodes={anchor=base},
	anchor=west,
	font=\ttfamily,
] (T) at (0,-.75) {
G&A&A&A&T&T&T&G&T&C&C&G&T&C&A&T&G&T&T&C&C&G&G&T&C&A&C&T\\
};

\begin{scope}[on background layer]
\draw[-,very thick, colorA] ($(Q-1-1.north west) + (0,.05cm)$) -- node[invisible] (Qanchor0) {} ($(Q-1-5.north east) + (0,.05cm)$);
\draw[-,very thick, colorA] ($(T-1-1.north west) + (0,.05cm)$) -- node[invisible] (Tanchor0) {} ($(T-1-5.north east) + (0,.05cm)$);

\draw[-,very thick, colorB] ($(Q-1-7.south west) + (0,-.05cm)$) -- node[invisible] (Qanchor1) {} ($(Q-1-9.south east) + (0,-.05cm)$);
\draw[-,very thick, colorB] ($(T-1-8.south west) + (0,-.05cm)$) -- node[invisible] (Tanchor1) {} ($(T-1-10.south east) + (0,-.05cm)$);

\draw[-,very thick, colorC] ($(Q-1-7.north west) + (0,.05cm)$) -- node[invisible] (Qanchor2) {} ($(Q-1-12.north east) + (0,.05cm)$);
\draw[-,very thick, colorC] ($(T-1-12.north west) + (0,.05cm)$) -- node[invisible] (Tanchor2) {} ($(T-1-17.north east) + (0,.05cm)$);

\draw[-,very thick, colorD] ($(Q-1-18.south west) + (0,-.05cm)$) -- node[invisible] (Qanchor3) {} ($(Q-1-20.south east) + (0,-.05cm)$);
\draw[-,very thick, colorD] ($(T-1-21.south west) + (0,-.05cm)$) -- node[invisible] (Tanchor3) {} ($(T-1-23.south east) + (0,-.05cm)$);
\draw[-,very thick, colorE] ($(Q-1-21.north west) + (0,.05cm)$) -- node[invisible] (Qanchor4) {} ($(Q-1-23.north east) + (0,.05cm)$);
\draw[-,very thick, colorE] ($(T-1-18.north west) + (0,.05cm)$) -- node[invisible] (Tanchor4) {} ($(T-1-20.north east) + (0,.05cm)$);
\draw[-,very thick, colorF] ($(Q-1-22.south west) + (0,-.05cm)$) -- node[invisible] (Qanchor5) {} ($(Q-1-26.south east) + (0,-.05cm)$);
\draw[-,very thick, colorF] ($(T-1-24.south west) + (0,-.05cm)$) -- node[invisible] (Tanchor5) {} ($(T-1-28.south east) + (0,-.05cm)$);
\draw[-,colorA] (Qanchor0) to node[pos=0.6] (a1) {\backgroundcontour{$a_1$}} (Tanchor0);
\node[colorAdark] at (a1) {$a_1$};
\draw[-,colorB] (Qanchor1) to node[pos=0.4] (a2) {\backgroundcontour{$a_2$}} (Tanchor1);
\node[colorBdark] at (a2) {$a_2$};
\draw[-,colorC] (Qanchor2) to node[pos=0.6] (a3) {\backgroundcontour{$a_3$}} (Tanchor2);
\node[colorCdark] at (a3) {$a_3$};
\draw[-,colorD] (Qanchor3) to node[right=-5pt,pos=0.6] (a4) {\backgroundcontour{$a_4$}} (Tanchor3);
\node[colorDdark] at (a4) {$a_4$};
\draw[-,colorE] (Qanchor4) to node[pos=0.5] (a5) {\backgroundcontour{$a_5$}} (Tanchor4);
\node[colorEdark] at (a5) {$a_5$};
\draw[-,colorF] (Qanchor5) to node[pos=0.4] (a6) {\backgroundcontour{$a_6$}} (Tanchor5);
\node[colorFdark] at (a6) {$a_6$};
\end{scope}

	\node[left] at (Q.west) {$Q=$};
	\node[left] at (T.west) {$T=$};

	\node[colorAdark,anchor=base] at ($ (Q-1-1)+(0,-2.5ex) $) {$\qbeg$};
	\node[colorAdark,anchor=base] at ($ (Q-1-5)+(0,-2.5ex) $) {$\qend$};
	\node[colorAdark,anchor=base] at ($ (T-1-1)+(0,-2.5ex) $) {$\tbeg$};
	\node[colorAdark,anchor=base] at ($ (T-1-5)+(0,-2.5ex) $) {$\tend$};

\foreach \i in {1,5,10,15,20,25}
{
	\node[anchor=base] at ($ (Q-1-\i.north) + (0,2ex) $) {\scriptsize \i};
}
\end{tikzpicture}
    \caption{A set $\mathcal{A} = \lbrace a_1, \dots, a_6 \rbrace$ of exact match anchors between strings $T$ and $Q$.}\label{fig:anchors}
\end{figure}

\begin{observation}[Exact match invariant]
    \label{obs:matchinvariant}
    The main invariant of an exact match anchor $a = ([\qbeg..\qend], [\tbeg..\tend])$ is that they indicate a substring of both $Q$ and $T$ of the same length, in symbols
$
    \qend - \qbeg + 1 = \tend - \tbeg + 1,
$
which can be rewritten as
$
    \qend - \qbeg = \tend - \tbeg
$
and also
$
    \qbeg - \tbeg = \qend - \tend.
$
\end{observation}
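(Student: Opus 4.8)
The plan is to unwind the definition of an exact match anchor and invoke nothing beyond the fact that equal strings have equal length. First I would recall that, by definition, $a = ([\qbeg..\qend],[\tbeg..\tend])$ being an anchor means $Q[\qbeg..\qend] = T[\tbeg..\tend]$ as strings over $\Sigma$; since these are literally the same string, they consist of the same number of characters. Using the $1$-indexing convention fixed in \Cref{sec:preliminaries}, a substring delimited by positions $x \le y$ spans exactly $y - x + 1$ characters, so $\lvert Q[\qbeg..\qend]\rvert = \qend - \qbeg + 1$ and $\lvert T[\tbeg..\tend]\rvert = \tend - \tbeg + 1$; equating the two lengths yields the first identity $\qend - \qbeg + 1 = \tend - \tbeg + 1$.

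The two remaining forms follow by elementary integer arithmetic. Subtracting $1$ from both sides gives $\qend - \qbeg = \tend - \tbeg$; transposing $\qend$ to the left-hand side and $\tbeg$ to the right-hand side (equivalently, adding $\tbeg - \qend$ to both sides of the previous equation) gives $\qbeg - \tbeg = \qend - \tend$. No case analysis, ordering assumption, or estimate is required, so the argument is complete in a couple of lines.

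The only point that deserves care — and the closest thing to an obstacle in an otherwise immediate statement — is being explicit about the indexing convention, namely that the interval $[x..y]$ with $x \le y$ corresponds to $y - x + 1$ characters rather than $y - x$; once that is pinned down, everything is forced. I would also flag, for later use, that the last form $\qbeg - \tbeg = \qend - \tend$ says that both endpoints of an anchor lie on the same diagonal, which is exactly the quantity exploited by the diagonal-based reasoning in \Cref{sec:diagonal}, so it is worth stating all three equivalent formulations here rather than re-deriving them in place.
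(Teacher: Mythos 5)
Your argument is correct and is exactly the (implicit) justification the paper relies on: the observation is stated without proof because it follows immediately from the definition of an exact match anchor, equality of string lengths, and the $[x..y]$ indexing convention giving $y-x+1$ characters. The algebraic rearrangements you give are the same trivial steps the paper takes for granted, so there is nothing to add.
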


A colinear chaining formulation is defined by the concept of \emph{anchor precedence}---which sequences of anchors are coherent with an alignment and thus form a chain---and the concept of \emph{chain cost}, the score of a chain when the gaps and overlaps between the anchors are interpreted as insertions, deletions, and substitutions of 
the alignment these represent.

\begin{definition}[Anchor precedence~\cite{JainGT22}]\label{def:precedence}
    Let $a,a'$ be two exact match anchors between $Q,T \in \Sigma^+$ such that $a = ([\qbeg..\qend], [\tbeg..\tend])$ and $a' = ([\qbeg'..\qend'], [\tbeg'..\tend'])$.
    Then we say that:
\begin{itemize}
    \item $a$ \emph{strictly precedes} $a'$, in symbols $a \prec a'$, if
    $
        \qbeg \le \qbeg', \;
        \qend \le \qend', \;
        \tbeg \le \tbeg', \;
        \tend \le \tend',
    $
    and strict inequality holds for at least one of the four inequalities;
    \item $a$ \emph{weakly precedes} $a'$, in symbols $a_1 \wprec a_2$, if $\qbeg \le \qbeg'$, $\tbeg \le \tbeg'$, and strict inequality holds for at least one of the two inequalities;
    \item $a$ \emph{strongly precedes} $a'$, in symbols $a_1 \chainxprec a_2$, if
    $
        \qbeg < \qbeg', \;
        \qend < \qend', \;
        \tbeg < \tbeg'$, and $
        \tend < \tend'
    $
    (the notation $\chainxprec$ derives from its use in the \texttt{\upshape ChainX} implementation).
\end{itemize}
\end{definition}
Note that $a \chainxprec a'$ implies $a \wprec a'$ and $a \prec a'$.
In \Cref{fig:anchors}, we have that $a_2 \prec a_3$, $a_2 \wprec a_3$, but $a_2 \not \chainxprec a_3$.
We refer the reader to \cite{JainGT22} for insight on the difference between $\prec$ and $\wprec$ (which will not be needed in this paper).

\begin{problem}[Colinear chaining with overlap and gap costs~\cite{JainGT22}]\label{prob:CLC}
Given $T,Q \in \Sigma^+$, a sequence of anchors $a_1, \ldots, a_c$ between $Q$ and $T$ is called a \emph{colinear chain} (or just chain) if $a_i \prec a_{i+1}$ for all $i \in [1 .. c-1]$.
    The \emph{cost} of a chain $A = a_1, \ldots, a_c$ is $\gscost(A) = \sum_{i=0}^{c} \connect(a_i, a_{i+1})$, where as a convention $a_0 = \startanchor = ([0..0],[0..0])$, $a_{c+1} = \finalanchor = ([\lvert Q \rvert + 1..\lvert Q \rvert + 1], [\lvert T \rvert + 1..\lvert T \rvert + 1])$, and function $\connect$ on $a = ([\qbeg..\qend],[\tbeg..\tend])$, $a' = ([\qbeg'..\qend'],[\tbeg'..\tend'])$ is defined as
$
    \connect(a,a') = \maxgap(a,a') + \diffoverlap(a,a')
$
    with 
\begin{align*}
    \maxgap(a, a') &=
    \max\!\big(
        0,\;
        \qbeg' - \qend - 1,\;
        \tbeg' - \tend - 1
    \big),
    &\text{(gap cost)} \\
    \diffoverlap(a, a') &=
    \big\lvert
    \max\!\big(
        0,\;
        \qend - \qbeg' + 1
    \big) - \max\big(
        0,\;
        \tend - \tbeg' + 1
    \big) \big\rvert.
    &\text{(overlap cost)} 
\end{align*}
    Given set $\mathcal{A} = \lbrace a_1, \dots, a_n \rbrace$ of anchors between $Q$ and $T$, the \emph{colinear chaining problem with overlap and gap costs} consists of finding an ordered subset $A = a_1$, \dots, $a_c$ of $\mathcal{A}$ that is a colinear chain and has minimum cost.
\end{problem}

See \Cref{fig:chain} for an example of a colinear chain $A$ and the computation of $\gscost(A)$.
The colinear chaining problem can be also stated using weak precedence $\wprec$ and strong precedence $\chainxprec$: to discriminate between the three versions, we say that the formulation of \Cref{prob:CLC} is \emph{under strict precedence}, while the latter variants are \emph{under weak precedence} or \emph{under \texttt{\upshape ChainX} or \emph{strong} precedence}.

\begin{figure}[btp]
    \centering
    \begin{tikzpicture}
\matrix[
	matrix of nodes,
	minimum size=2mm,
	column sep={3.5mm,between origins},
	row sep={3mm,between origins},
	inner sep=0pt,
	nodes={anchor=base},
	anchor=west,
	font=\ttfamily,
] (Q) {
A&C&A&T&C&T&G&C&C&A&A&C&A&T&A&T&C&C\\
};

\matrix[
	matrix of nodes,
	minimum size=2mm,
	column sep={3.5mm,between origins},
	row sep={3mm,between origins},
	inner sep=0pt,
	nodes={anchor=base},
	anchor=west,
	font=\ttfamily,
] (T) at (0,-.75cm) {
A&C&A&T&C&C&G&G&C&C&A&T&A&T&A&T&C&C\\
};

\draw[-,very thick, colorA] ($(Q-1-2.north west) + (0,.05cm)$) -- node[invisible] (Qanchor0) {} ($(Q-1-5.north east) + (0,.05cm)$);
\draw[-,very thick, colorA] ($(T-1-2.north west) + (0,.05cm)$) -- node[invisible] (Tanchor0) {} ($(T-1-5.north east) + (0,.05cm)$);
\draw[-,very thick, colorB] ($(Q-1-7.south west) + (0,-.05cm)$) -- node[invisible] (Qanchor1) {} ($(Q-1-10.south east) + (0,-.05cm)$);
\draw[-,very thick, colorB] ($(T-1-8.south west) + (0,-.05cm)$) -- node[invisible] (Tanchor1) {} ($(T-1-11.south east) + (0,-.05cm)$);
\draw[-,very thick, colorC] ($(Q-1-12.north west) + (0,.05cm)$) -- node[invisible] (Qanchor2) {} ($(Q-1-16.north east) + (0,.05cm)$);
\draw[-,very thick, colorC] ($(T-1-10.north west) + (0,.05cm)$) -- node[invisible] (Tanchor2) {} ($(T-1-14.north east) + (0,.05cm)$);
\draw[-,very thick, colorD] ($(Q-1-13.south west) + (0,-.05cm)$) -- node[invisible] (Qanchor3) {} ($(Q-1-17.south east) + (0,-.05cm)$);
\draw[-,very thick, colorD] ($(T-1-13.south west) + (0,-.05cm)$) -- node[invisible] (Tanchor3) {} ($(T-1-17.south east) + (0,-.05cm)$);
\draw[-,colorA] (Qanchor0) to node[pos=0.7] (a1) {\backgroundcontour{$a_1$}} (Tanchor0);
\node[colorAdark] at (a1) {$a_1$};
\draw[-,colorB] (Qanchor1) to node[pos=0.3] (a2) {\backgroundcontour{$a_2$}} (Tanchor1);
\node[colorBdark] at (a2) {$a_2$};
\draw[-,colorC] (Qanchor2) to node[pos=0.7] (a3) {\backgroundcontour{$a_3$}} (Tanchor2);
\node[colorCdark] at (a3) {$a_3$};
\draw[-,colorD] (Qanchor3) to node[pos=0.3] (a4) {\backgroundcontour{$a_4$}} (Tanchor3);
\node[colorDdark] at (a4) {$a_4$};
\begin{scope}[on background layer]
\draw[draw=none,pattern={Lines[angle=45,distance=2pt,line width=1pt]},pattern color=black!30] ($(Q-1-6.north west)!0.5!(Q-1-5.north east)$) rectangle ($(Q-1-6.south east)!0.5!(Q-1-7.south west)$);
\draw[draw=none,pattern={Lines[angle=45,distance=2pt,line width=1pt]},pattern color=black!30] ($(T-1-6.north west)!0.5!(T-1-5.north east)$) rectangle ($(T-1-7.south east)!0.5!(T-1-8.south west)$);
\draw[draw=none,pattern={Lines[angle=45,distance=2pt,line width=1pt]},pattern color=black!30] ($(Q-1-11.north west)!0.5!(Q-1-10.north east)$) rectangle ($(Q-1-11.south east)!0.5!(Q-1-12.south west)$);
\draw[draw=none,fill=black!30] ($(T-1-10.north west)!0.5!(T-1-9.north east)$) rectangle ($ (T-1-12.south east)!0.5!(T-1-11.south west) $);
\draw[draw=none,fill=black!30] ($(Q-1-13.north west)!0.5!(Q-1-12.north east)$) rectangle ($ (Q-1-17.south east)!0.5!(Q-1-16.south west) $);
\draw[draw=none,fill=black!30] ($(T-1-13.north west)!0.5!(T-1-12.north east)$) rectangle ($ (T-1-15.south east)!0.5!(T-1-14.south west) $);
\draw[draw=none,pattern={Lines[angle=45,distance=2pt,line width=1pt]},pattern color=black!30] ($(Q-1-1.north west)$) rectangle ($(Q-1-1.south east)!0.5!(Q-1-2.south west)$);
\draw[draw=none,pattern={Lines[angle=45,distance=2pt,line width=1pt]},pattern color=black!30] ($(T-1-1.north west)$) rectangle ($(T-1-1.south east)!0.5!(T-1-2.south west)$);
\draw[draw=none,pattern={Lines[angle=45,distance=2pt,line width=1pt]},pattern color=black!30] ($(Q-1-17.north east)!0.5!(Q-1-18.north west)$) rectangle ($(Q-1-18.south east)$);
\draw[draw=none,pattern={Lines[angle=45,distance=2pt,line width=1pt]},pattern color=black!30] ($(T-1-17.north east)!0.5!(T-1-18.north west)$) rectangle ($(T-1-18.south east)$);
\end{scope}
    \node[anchor=west] at (0,-1.5cm) (cost) {$(\stripedunderline 1 + 0) + (\stripedunderline 2 + 0) + (\stripedunderline 1 + \solidunderline 2) + (0 + \solidunderline 2) + (\stripedunderline 1 + 0) = 9$};
	\node[left] at (Q.west) {$Q=$};
	\node[left] at (T.west) {$T=$};
	\node[left] at (cost.west) {$\gscost(A)=$};

\foreach \i in {2,5,7,10,12,13,16,17}
{
	\node[anchor=base] at ($ (Q-1-\i.north) + (0,2ex) $) {\scriptsize \i};
}
\foreach \i in {2,5,8,10,11,13,14,17}
{
	\node[anchor=base] at ($ (T-1-\i.south) - (0,2.5ex) $) {\scriptsize \i};
}
\end{tikzpicture}
    \caption{Example of a chain $A = a_1, a_2, a_3, a_4$ (under all precedence notions introduced) and of cost $\gscost(A) = 9$, where the gaps and overlaps between consecutive anchors are marked with striped and solid background, respectively.}\label{fig:chain}
\end{figure}

Next, we recall the main theoretical result from \cite{JainGT22}.
\begin{problem}[Anchored edit distance~\cite{JainGT22}]\label{prob:aED}
    Given $Q,T \in \Sigma^+$ and a set of anchors $\mathcal{A} = \lbrace a_1, \dots, a_n \rbrace$, we say that for any anchor $a = ([\qbeg..\qend],[\tbeg..\tend]) \in \mathcal{A}$ and integer $k \in [0..\tend - \tbeg]$ the character match $Q[\qbeg + k] = T[\tbeg + k]$ is \emph{supported}.
    Compute the optimal alignment between $Q$ and $T$ subject to the conditions that a match supported by some anchor has edit cost 0, a match that is not supported by any anchor has edit cost 1, and insertions, deletions, and substitutions have cost 1.
\end{problem}

\begin{theorem}[{\cite[Theorem 2]{JainGT22}}]\label{thm:edconnection}
    For a fixed set of anchors $\mathcal{A}$, the following quantities are equal: the anchored edit distance, the optimal colinear chaining cost under strict precedence, and the optimal colinear chaining cost under weak precedence.
\end{theorem}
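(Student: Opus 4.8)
The plan is to prove all three quantities equal by squeezing the anchored edit distance $E$ (\Cref{prob:aED}) between the two chaining optima of \Cref{prob:CLC} from both sides. I will show: (A) every colinear chain $A$ --- under \emph{either} strict or weak precedence --- can be realized by an alignment of anchored edit cost at most $\gscost(A)$, so $E$ is at most the optimal chaining cost; and (B) an optimal anchored alignment can be converted into a colinear chain --- again under either precedence --- of cost at most $E$, so the optimal chaining cost is at most $E$. Applying (A) and (B) with strict precedence yields that $E$ equals the strict optimum, and with weak precedence that $E$ equals the weak optimum, which is exactly the statement.

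\textbf{(A) Chain $\Rightarrow$ alignment.} Given a chain $a_1,\dots,a_c$, extended by the sentinels $a_0 = \startanchor$ and $a_{c+1} = \finalanchor$, I would first \emph{normalize} it at zero cost: whenever the matched interval of some $a_i$ is contained, on both $Q$ and $T$, in that of a neighbour, deleting $a_i$ leaves $\gscost$ unchanged, since $\maxgap$ and $\diffoverlap$ of the two surviving neighbours depend only on their own endpoints; after this, consecutive anchors are ``staircase ordered''. I then build the alignment as the concatenation, for $i = 0,\dots,c$, of a \emph{junction block} realizing the transition $a_i \to a_{i+1}$ with exactly $\connect(a_i,a_{i+1})$ edits --- the $\maxgap$ part paid by indels (plus substitutions when the gap is nonempty on both axes) across the unmatched gap, and the $\diffoverlap$ part paid by the indels that reconcile the differing query- and target-overlap lengths once each overlapped position is assigned to exactly one of $a_i,a_{i+1}$ --- together with, for each $i \in [1..c]$, an \emph{anchor block} aligning the part of $a_i$ left over after its two junctions by supported matches only, at cost $0$; this is legitimate because any sub-interval of an anchor's matched region is still supported by that anchor (\Cref{obs:matchinvariant}).

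\textbf{(B) Alignment $\Rightarrow$ chain.} View an optimal anchored alignment as a monotone path $P$ from $(0,0)$ to $(|Q|,|T|)$ whose cost $E$ counts exactly its non-free steps (insertions, deletions, substitutions, unsupported matches). The free (supported-match) steps of $P$ form maximal diagonal runs $R_1,\dots,R_c$ in path order; from maximality one checks that consecutive runs strictly advance both coordinates, and from optimality of $P$ that no single anchor ``bridges'' two runs (otherwise $P$ could merge them for free and be shorter). For each run $R_i$, lying on diagonal $d_i$, pick a witness anchor $b_i$ on diagonal $d_i$ that covers the first match of $R_i$, chosen extremally (e.g.\ with minimum $\qbeg$ among such anchors, or the symmetric maximal choice, whichever keeps the coordinates non-decreasing); then $b_1,\dots,b_c$ are pairwise distinct and, I claim, form a colinear chain whose $\gscost$ telescopes against the non-free mass of $P$ lying between consecutive runs, hence is at most $E$. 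For the strict-precedence statement one additionally needs the \emph{end} coordinates of the $b_i$ to be non-decreasing, which may require a final zero-cost normalization of the extracted chain exactly as in (A).

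\textbf{Main obstacle.} I expect the crux to be the bookkeeping around overlaps. In direction (A), the delicate case is a junction where the anchors overlap on one axis but leave a gap on the other --- or overlap so heavily on both that a normalized anchor is almost or entirely consumed --- and one must check that the block decomposition still assembles into a valid monotone path of total cost $\sum_{i=0}^{c}\connect(a_i,a_{i+1})$. In direction (B), the subtle point is that a witness anchor can protrude far past its run, so a careless choice of the $b_i$ need not be monotone even in the start coordinates, let alone the end ones, i.e.\ need not be a valid chain at all; identifying the correct extremal choice and using optimality of $P$ to exclude the bad overlap patterns (two runs sharing a witness, or a witness overrunning the next run) is where most of the effort lies. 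This same normalization analysis, incidentally, also shows directly that the strict and weak chaining optima cannot differ.
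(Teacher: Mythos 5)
This statement is not proved in the paper at all: it is imported verbatim from Jain et al.\ as \cite[Theorem 2]{JainGT22}, and the paper only builds on it (\Cref{cor:maximality}, \Cref{thm:chainxprecequivalence}). So the only comparison available is against the cited source, whose strategy is indeed the two-sided sandwich you propose: chain $\Rightarrow$ alignment of cost at most $\gscost(A)$, and alignment $\Rightarrow$ chain of cost at most $E$, carried out for both precedence relations (note that neither precedence implies the other here --- a pair with $\qbeg=\qbeg'$, $\tbeg=\tbeg'$, $\qend<\qend'$ is strict-comparable but not weak-comparable --- so doing both directions for both relations, as you do, is genuinely necessary). Direction (A) is essentially right: truncating each anchor at the start of its successor leaves residues whose lengths on $Q$ and $T$ differ by exactly $\diffoverlap$, so the junction is payable with $\maxgap$ substitutions/indels plus $\diffoverlap$ indels.

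The genuine gap is in your direction (B). Picking \emph{one} witness anchor $b_i$ per maximal run $R_i$ of supported matches does not work: a match is only required to be supported by \emph{some} anchor, so a run of length $\ell$ may be coverable only by many short overlapping anchors on diagonal $d_i$, none of which contains more than a few of its positions. If $b_i$ covers only the first match of $R_i$ and ends there, the $Q$-gap from $b_i$ to $b_{i+1}$ includes the remaining $\ell-1$ free positions of $R_i$, so $\connect(b_i,b_{i+1})$ exceeds the non-free mass of $P$ between the runs and the telescoping bound fails. The fix is to cover each run by a \emph{perfect sub-chain} of anchors (consecutive covering anchors on the same diagonal overlap or abut, so their pairwise $\connect$ is $0$, exactly the situation of \Cref{cor:maximality} and \Cref{assumption:diag}), and only then bound the single junction cost between the last cover anchor of $R_i$ and the first of $R_{i+1}$ via \Cref{lem:diagonalcost} (diagonal distance in the overlap cases, $\maxgap$ in the gap--gap case) against the non-free steps of $P$ separating the runs. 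A smaller inaccuracy: your normalization in (A) is claimed to be ``at zero cost,'' but removing a contained anchor satisfies only $\connect(a_{i-1},a_i)+\connect(a_i,a_{i+1})\ge\connect(a_{i-1},a_{i+1})$ (the inequality proved in \Cref{thm:chainxprecequivalence} can be strict); this is harmless for the $\le$ you need but should not be asserted as an equality.
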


\Cref{thm:edconnection} connects the chaining of (exact) matches to unit-cost edit distance: if the anchors in input are the set of all maximal exact matches (MEMs), then all character matches are supported and thus the cost of the optimal colinear chain is exactly equal to unit-cost edit distance.
Note however that \Cref{prob:CLC} and \Cref{thm:edconnection} admit an anchor set chosen arbitrarily.
An alternative interpretation of the result is that breaking each anchor into many length-1 anchors, obtaining a setting equivalent to that of anchored edit distance, does not change the cost of the optimal solution to chaining nor anchored edit distance.
This holds true even when we break the anchors into matches of different lenghts, for example, if we break down a match of length 5 into two matches of length 3 and 2.
Conversely, we can merge together anchors obtaining longer exact matches.
We formalize this intuition into the following corollary of \Cref{thm:edconnection}.

\begin{corollary}[Equivalence under perfect chain operations]\label{cor:maximality}
    For a given set of anchors $\mathcal{A} = \lbrace a_1, \dots, a_n \rbrace$, consider the new set $\mathcal{A}'$ obtained by breaking down each anchor $a_i = ([\qbeg..\qend], [\tbeg..\tend])$ into length-1 anchors $a_{i,k} = ([\qbeg + k],[\tbeg + k])$ for $i \in [1..n]$, $k \in [0..\qend - \qbeg]$.
    Conversely, let $a = ([\qbeg..\qend], [\tbeg..\tend])$ and $a' = ([\qbeg'..\qend'], [\tbeg'..\tend'])$ be two anchors of $\mathcal{A}$ that form a \emph{perfect chain}, that is, $a \prec a'$ and $\connect(a,a') = 0$: the overlap difference is $0$ and there is no gap in $Q$ nor in $T$. 
    Then we can merge $a$ and $a'$ into a longer anchor $a \cdot a' \coloneqq ([\qbeg..\qend'],[\tbeg..\tend'])$.
    Let $\mathcal{A}''$ be the set of anchors obtained from $\mathcal{A}$ by arbitrarily splitting anchors and merging those forming perfect chains.
    Then the anchored edit distance and the optimal colinear chaining cost of $\mathcal{A}$, $\mathcal{A}'$, and $\mathcal{A}''$ are the same value.
\end{corollary}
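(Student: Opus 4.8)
The plan is to reduce the whole statement to \Cref{thm:edconnection} together with one elementary invariant: \emph{the set of character matches supported by an anchor set does not change when we split an anchor or merge a perfect chain}. Indeed, by \Cref{prob:aED} the anchored edit distance depends on the anchor set only through which character matches are supported, so once this invariance is in hand the anchored edit distances of $\mathcal{A}$, $\mathcal{A}'$, and $\mathcal{A}''$ coincide, and \Cref{thm:edconnection} immediately transfers the equality to the optimal colinear chaining costs (under strict precedence, equivalently under weak precedence).

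First I would treat \emph{splitting}. If $a = ([\qbeg..\qend],[\tbeg..\tend]) \in \mathcal{A}$ and $a_{i,k} = ([\qbeg+k],[\tbeg+k])$ for $k \in [0..\qend-\qbeg]$, then each $a_{i,k}$ is a valid length-$1$ anchor because $Q[\qbeg+k] = T[\tbeg+k]$ follows from $Q[\qbeg..\qend] = T[\tbeg..\tend]$; and the matches supported by $a$ are exactly $\{\, Q[\qbeg+k] = T[\tbeg+k] : k \in [0..\qend-\qbeg] \,\}$, i.e. the union of the singleton supported sets of the $a_{i,k}$. Hence $\mathcal{A}$ and $\mathcal{A}'$ support the same character matches, so they have the same anchored edit distance, and by \Cref{thm:edconnection} the same optimal chaining cost. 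The identical argument applies to an arbitrary split of one anchor into two contiguous sub-anchors.

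Next, \emph{merging}. Suppose $a \prec a'$ and $\connect(a,a') = 0$. I would first show that this forces $a$ and $a'$ onto a common diagonal with $\qbeg' \le \qend + 1$. Writing $\ell_Q = \qend - \qbeg' + 1$ and $\ell_T = \tend - \tbeg' + 1$, the condition $\diffoverlap(a,a') = 0$ reads $\max(0,\ell_Q) = \max(0,\ell_T)$, which excludes the mixed case (if only one of $\ell_Q,\ell_T$ were positive, $\diffoverlap$ would equal that positive value). If both are positive they are equal, giving $\qbeg' - \tbeg' = \qend - \tend$, which by \Cref{obs:matchinvariant} equals $\qbeg - \tbeg$; if both are $\le 0$, then $\maxgap(a,a') = 0$ forces $\qbeg' = \qend + 1$ and $\tbeg' = \tend + 1$, and again \Cref{obs:matchinvariant} gives $\qbeg' - \tbeg' = \qbeg - \tbeg$. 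So in either surviving case $a,a'$ share the diagonal $d = \qbeg - \tbeg$, and $\maxgap = 0$ gives $\qbeg' \le \qend + 1$, hence $[\qbeg..\qend] \cup [\qbeg'..\qend'] = [\qbeg..\qend']$. Every position $i$ in this interval then satisfies $Q[i] = T[i-d]$ (from $a$ when $i \le \qend$, from $a'$ when $i \ge \qbeg'$), and $|Q[\qbeg..\qend']| = \qend' - \qbeg + 1 = \tend' - \tbeg + 1 = |T[\tbeg..\tend']|$ by \Cref{obs:matchinvariant}; therefore $a \cdot a' = ([\qbeg..\qend'],[\tbeg..\tend'])$ is a valid anchor, and the matches it supports are exactly the union of those supported by $a$ and by $a'$. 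So replacing $\{a,a'\}$ by $a \cdot a'$ changes neither the set of supported matches nor, by the first paragraph, the anchored edit distance and the optimal chaining cost.

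Finally, since $\mathcal{A}''$ is obtained from $\mathcal{A}$ by a finite sequence of such splits and perfect-chain merges, a straightforward induction on the number of operations (each step handled by one of the two paragraphs above) shows that $\mathcal{A}$, $\mathcal{A}'$, and $\mathcal{A}''$ support the same set of character matches, hence have the same anchored edit distance, and \Cref{thm:edconnection} then gives equality of the optimal colinear chaining costs. The only step needing genuine care is the merge case analysis; in particular, the claim that $\connect(a,a') = 0$ already implies $a$ and $a'$ lie on a common diagonal --- so that the concatenated interval pair is again an exact match --- is the crux, and it foreshadows the diagonal-distance phenomenon developed in \Cref{sec:diagonal}.
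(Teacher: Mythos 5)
Your proposal is correct and follows essentially the same route as the paper's own proof: both reduce the statement to the invariance of the set of supported character matches under splitting and perfect-chain merging, and then invoke \Cref{thm:edconnection}. The only difference is that you additionally verify in detail that a perfect chain forces a common diagonal so that the merged interval pair is a valid exact match anchor --- a well-definedness check the paper's proof leaves implicit.
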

\begin{proof}
    The operation of breaking down any given anchor $a$ into smaller anchors does not change the character matches of $Q$ and $T$ supported by the anchors.
    Similarly, the inverse operation of merging two anchors forming a perfect chain, even when they overlap, does not change the matches supported by the anchors.
    Since the corresponding set of supported character matches is not affected by these operations, \Cref{thm:edconnection} guarantees that the optimal colinear chaining cost stays the same as well.
\end{proof}

\section{Pitfalls of practical colinear chaining}\label{sec:pitfalls}
Let $\mathcal{A} = \lbrace a_1, \dots, a_n \rbrace$ be a set of anchors between $Q$ and $T$.
Jain et al.~\cite{JainGT22} proposed a solution to \Cref{prob:CLC} on input $\mathcal{A}$ based on the following intuitive dynamic programming formulation.
Assume the anchors of $\mathcal{A}$ are already sorted by increasing starting position $\qbeg$ in $Q$.
Then we can compute $C[i]$ for $i \in [0..n+1]$, equal to the cost of the optimal chain ending with anchor $a_i$, using the following recursion:
\begin{align}
    C[0] &\,=\, 0 \notag\\
    C[i]
    &\,=\,
    \min_{j < i \;:\; a_j \prec a_i}
    C[j] + \connect(a_j, a_i) &\text{for $i \in [1..n]$.}\label{eq:recursion}
\end{align}
The formula is well-defined, since by construction $a_0 = \startanchor$ precedes all other anchors, and the correctness follows (for the most part) from the additivity of the $\connect$ function: the cost of adding anchor $a'$ to an optimal chain ending at $a$, with $a \prec a'$, depends only on the gap and overlap costs between $a$ and $a'$.

However, in \cite{JainGT22} it is not proven explicitly whether all optimal recursive cases under the strict (or even weak) precedence are considered by looking only back in the ordered list of anchors, or in other words, that the correctness in computing $C[i]$ is still valid when skipping some or all anchors $a_{j} \prec a_{i}$ with $j > i$ and such that the starting positions $\qbeg$ in $Q$ of $a_{i}$ and $a_{j}$ are equal.
Moreover, the actual implementation of \texttt{\upshape ChainX}~\cite{ChainX} uses the strong precedence formulation $\chainxprec$ (see \Cref{def:precedence}).
We leave the proof of equivalence of the optimal chaining cost under $\prec$ (or $\wprec$) and $\chainxprec$ to \Cref{sec:appendix} in the Appendix, and the main text we concentrate on \Cref{prob:CLC} under $\chainxprec$.

Jain et al.\ also develop a practical solution to colinear chaining~\cite[Algorithm 2]{JainGT22}. 
Intuitively, if we guess that the cost of the optimal chain is at most $B$, initially set to some fixed parameter $B_\text{start}$, then we do not have to consider all recursive cases from \Cref{eq:recursion} in computing $C[i]$.
In particular, we can skip all anchors $a_j \prec a_i$ such that $\connect(a_j,a_i) > B$, since they clearly cannot be used in a chain of cost at most $B$.
If $a_j = ([\qbeg^j..\qend^j],[\tbeg^j..\tend^j])$ and $a_i = ([\qbeg^i..\qend^i],[\tbeg^i..\tend^i])$, Jain et al.\ relax the previous constraint to condition
\begin{equation}
    \qbeg^i - \qbeg^j \le B \qquad
    (\text{with} \; j < i \;\text{and}\; a_j \chainxprec a_i), \label{eq:flawedcondition}
\end{equation}
and argue that if a chain of cost at most $B$ exists, then \Cref{eq:flawedcondition} holds for all adjacent anchors of an optimal-cost chain.
They implement this modified computation of values $C[i]$ as the main loop of \texttt{\upshape ChainX}.
After its execution, if $C[n+1] > B$, the algorithm updates guess $B$ to $B \cdot \alpha$, with $\alpha > 1$ a constant ramp-up factor.

Assuming that:
\begin{itemize}
    \item there are less anchors than $\lvert Q \rvert$ and $\lvert T \rvert$, that is, $n \le \min\big( \lvert Q \rvert, \lvert T \rvert \big)$; and
    \item the anchors are distributed uniformly, that is, the probability that anchor $a_i$'s interval in $Q$ starts at position $x$ (in symbols $\qbeg^i = x$) is equal to $1 / \lvert Q \rvert \le 1 / n$;
\end{itemize}
then \texttt{\upshape ChainX} yields a $O(\mathrm{SOL} \cdot n + n \log n)$-time solution, where $\mathrm{SOL}$ is the chaining cost of the output chain (and $B_\text{max} / \alpha < \mathrm{SOL} \le B_\text{max}$ with $B_\text{max}$ the last value of $B$ considered).

We note the following error in the main strategy of \texttt{\upshape ChainX}.
\begin{observation}
\label{obs:wrong}
    Plugging condition $\qbeg^i - \qbeg^j \le B$ (\Cref{eq:flawedcondition}) into the subscript of the $\min$ operator in \Cref{eq:recursion} does not guarantee that all chains of cost at most $B$ are considered, including optimal chains, and \texttt{\upshape ChainX}~\cite[Algorithm 2]{JainGT22} is not always correct.
    Indeed, consider $T,Q \in \Sigma^+$ with $\lvert T \rvert = \lvert Q \rvert = 13$ and anchor set $\mathcal{A} = a_1, a_2, a_3, a_4$ with $a_1 = ([1..7],[1..7])$, $a_2 = ([9..12],[7..10])$, $a_3 = ([7..10],[9..12])$, and $a_4 = ([11..13], [11..13])$ as visualized in \Cref{fig:failing}.
    Then $\connect(a_1, a_4) = 3$, distance $\qbeg^4 - \qbeg^1$ is equal to $10$, optimal chain $a_1, a_4$ has cost $3$, whereas
    suboptimal chains $a_1, a_2, a_4$ and $a_1, a_3, a_4$ have cost $4$.
    If $B_\mathrm{start} = 9$, recursive case $\connect(a_1, a_4)$ is skipped but the suboptimal chains are considered, and thus $C[n+1] = 4 \le 9 = B$ and the termination condition is met.\footnote{The implementation of \texttt{\upshape ChainX}, that uses value $B_\text{start} = 100$, returns a suboptimal chain on a similarly crafted example where the initial runs of $\mathtt{A}$ in $Q$ and $T$ have length 98 and maximal unique match (MUM) anchors of length at least 3 are computed.}
\end{observation}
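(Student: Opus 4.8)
The statement is already accompanied by the explicit instance of \Cref{fig:failing}, so the argument is a direct (finite) verification, which I would organize in three parts. First I would check that $\mathcal{A} = \lbrace a_1,a_2,a_3,a_4\rbrace$ is a legitimate instance of \Cref{prob:CLC} under $\chainxprec$: I would exhibit strings $Q,T$ with $\lvert Q\rvert=\lvert T\rvert=13$ for which the four interval pairs are exact matches (all character equalities forced by the four anchors are mutually consistent, so such $Q,T$ exist), and then read off from \Cref{def:precedence} that $a_1\chainxprec a_2\chainxprec a_4$, $a_1\chainxprec a_3\chainxprec a_4$, and $a_1\chainxprec a_4$, while $a_2$ and $a_3$ are incomparable even under $\prec$ (both $\qbeg$ and $\tbeg$ compare in opposite directions).

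Second, I would compute the relevant $\connect$ values straight from the definition in \Cref{prob:CLC}, taking care with the $\max(0,\cdot)$ clipping in $\maxgap$ and $\diffoverlap$ and with the sentinels $\startanchor=([0..0],[0..0])$, $\finalanchor=([14..14],[14..14])$. One obtains $\connect(\startanchor,a_1)=\connect(a_4,\finalanchor)=0$, $\connect(a_1,a_4)=3$ (a gap of $3$ in both $Q$ and $T$, zero overlap difference), and $\connect(a_1,a_2)=\connect(a_2,a_4)=\connect(a_1,a_3)=\connect(a_3,a_4)=2$. Because $a_2$ and $a_3$ cannot coexist in a chain, the colinear chains built from $\mathcal{A}$ are exactly $a_1a_4$, $a_1a_2a_4$, $a_1a_3a_4$ and their subchains; evaluating $\gscost$ on all of them (the shorter subchains only incur larger sentinel gaps) shows the optimum is $3$, attained uniquely by the chain $a_1,a_4$.

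Third, I would simulate the \texttt{\upshape ChainX} recursion: \Cref{eq:recursion} restricted by \Cref{eq:flawedcondition} with $B=B_\text{start}=9$. Processing anchors in order of $\qbeg$ (i.e.\ $a_1,a_3,a_2,a_4$), the transitions whose $\qbeg$-gap exceeds $9$ are exactly $\startanchor\to a_4$ (gap $11$), $a_1\to a_4$ (gap $10$), $\startanchor\to\finalanchor$ (gap $14$) and $a_1\to\finalanchor$ (gap $13$); these are the only ones discarded, so in particular the unique optimal chain $a_1,a_4$ is never reachable. The surviving transitions assign DP cost $0$ to $a_1$, cost $2$ to both $a_2$ and $a_3$, cost $4$ to $a_4$, and hence $C[n+1]=4$. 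Since $4\le 9=B$, the termination test of \texttt{\upshape ChainX} succeeds and the algorithm reports a chain of cost $4>3$; moreover the discarded case $\connect(a_1,a_4)$ is precisely the one realizing a cost-$3\le B$ chain, which is the second half of the claim (restricting \Cref{eq:recursion} by \Cref{eq:flawedcondition} can miss optimal chains of cost $\le B$).

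There is no real obstacle here — the instance is tiny and every computation is routine — so the only things that require care are bookkeeping ones: getting the $\max(0,\cdot)$ terms right, handling the $\startanchor/\finalanchor$ sentinels at coordinates $0$ and $\lvert Q\rvert+1=14$, and justifying that the chain enumeration is exhaustive, which rests entirely on the single structural fact that $a_2\not\prec a_3$ and $a_3\not\prec a_2$.
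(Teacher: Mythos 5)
Your proposal is correct and takes essentially the same route as the paper: \Cref{obs:wrong} is proved by direct verification of the explicit four-anchor counterexample of \Cref{fig:failing}, and your computed $\connect$ values ($\connect(a_1,a_4)=3$, all other consecutive costs $2$), the chain enumeration (resting on $a_2$ and $a_3$ being $\prec$-incomparable), and the simulation of the pruned recursion with $B=9$ all match the paper's numbers. The only extra care you add (exhaustively listing chains and DP states) is sound and merely makes explicit what the paper leaves to the reader.
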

\begin{figure}[htp]
\begin{subfigure}{.5\textwidth}\centering\begin{tikzpicture}
\matrix[
	matrix of nodes,
	minimum size=2mm,
	column sep={3.5mm,between origins},
	row sep={3mm,between origins},
	inner sep=0pt,
	nodes={anchor=base},
	anchor=west,
	font=\ttfamily,
] (Q) {
A&A&A&A&A&A&T&G&T&C&T&C&C\\
};

\matrix[
	matrix of nodes,
	minimum size=2mm,
	column sep={3.5mm,between origins},
	row sep={3mm,between origins},
	inner sep=0pt,
	nodes={anchor=base},
	anchor=west,
	font=\ttfamily,
] (T) at (0,-.75) {
A&A&A&A&A&A&T&C&T&G&T&C&C\\
};

\draw[-,very thick, colorA] ($(Q-1-1.north west) + (0,.05cm)$) -- node[invisible] (Qanchor0) {} ($(Q-1-7.north east) + (0,.05cm)$);
\draw[-,very thick, colorA] ($(T-1-1.north west) + (0,.05cm)$) -- node[invisible] (Tanchor0) {} ($(T-1-7.north east) + (0,.05cm)$);
\draw[-,colorA] (Qanchor0) to node[pos=0.6] (a1) {\backgroundcontour{$a_1$}} (Tanchor0);
\node[colorAdark] at (a1) {$a_1$};

\draw[-,very thick, colorB] ($(Q-1-9.south west) + (0,-.05cm)$) -- node[invisible] (Qanchor1) {} ($(Q-1-12.south east) + (0,-.05cm)$);
\draw[-,very thick, colorB] ($(T-1-7.south west) + (0,-.05cm)$) -- node[invisible] (Tanchor1) {} ($(T-1-10.south east) + (0,-.05cm)$);\draw[-,colorB] (Qanchor1) to node[pos=1] (a2) {\backgroundcontour{$a_2$}} (Tanchor1);
\node[colorBdark] at (a2) {$a_2$};

\draw[-,very thick, colorC] ($(Q-1-7.north west) + (0,.1cm)$) -- node[invisible] (Qanchor2) {} ($(Q-1-10.north east) + (0,.1cm)$);
\draw[-,very thick, colorC] ($(T-1-9.north west) + (0,.1cm)$) -- node[invisible] (Tanchor2) {} ($(T-1-12.north east) + (0,.1cm)$);\draw[-,colorC] (Qanchor2) to node[pos=0] (a3) {\backgroundcontour{$a_3$}} (Tanchor2);
\node[colorCdark] at (a3) {$a_3$};

\draw[-,very thick, colorD] ($(Q-1-11.south west) + (0,-.1cm)$) -- node[invisible] (Qanchor3) {} ($(Q-1-13.south east) + (0,-.1cm)$);
\draw[-,very thick, colorD] ($(T-1-11.south west) + (0,-.1cm)$) -- node[invisible] (Tanchor3) {} ($(T-1-13.south east) + (0,-.1cm)$);
\draw[-,colorD] (Qanchor3) to node[pos=0.25] (a4) {\backgroundcontour{$a_4$}} (Tanchor3);
\node[colorDdark] at (a4) {$a_4$};
\node[left] at (Q.west) {$Q=$};
\node[left] at (T.west) {$T=$};
\node at (0,-1.5) {$\vphantom{\gscost(A)}$};
\foreach \i in {1,7,9,10,11,12,13}
{
	\node[anchor=base] at ($ (Q-1-\i.north) + (0,2ex) $) {\scriptsize \i};
}
\foreach \i in {1,7,9,10,11,12,13}
{
	\node[anchor=base] at ($ (T-1-\i.south) - (0,2.5ex) $) {\scriptsize \i};
}
\end{tikzpicture}\caption{Anchors $\mathcal{A}$ in input.}\label{fig:failing:anchors}\end{subfigure}%
\begin{subfigure}{.5\textwidth}\centering%
\begin{tikzpicture}
\matrix[
	matrix of nodes,
	minimum size=2mm,
	column sep={3.5mm,between origins},
	row sep={3mm,between origins},
	inner sep=0pt,
	nodes={anchor=base},
	anchor=west,
	font=\ttfamily,
] (Q) {
A&A&A&A&A&A&T&G&T&C&T&C&C\\
};

\matrix[
	matrix of nodes,
	minimum size=2mm,
	column sep={3.5mm,between origins},
	row sep={3mm,between origins},
	inner sep=0pt,
	nodes={anchor=base},
	anchor=west,
	font=\ttfamily,
] (T) at (0,-.75) {
A&A&A&A&A&A&T&C&T&G&T&C&C\\
};

\draw[-,very thick, colorA] ($(Q-1-1.north west) + (0,.05cm)$) -- node[invisible] (Qanchor0) {} ($(Q-1-7.north east) + (0,.05cm)$);
\draw[-,very thick, colorA] ($(T-1-1.north west) + (0,.05cm)$) -- node[invisible] (Tanchor0) {} ($(T-1-7.north east) + (0,.05cm)$);
\draw[-,very thick, colorC] ($(Q-1-7.south west) + (0,-.05cm)$) -- node[invisible] (Qanchor1) {} ($(Q-1-10.south east) + (0,-.05cm)$);
\draw[-,very thick, colorC] ($(T-1-9.south west) + (0,-.05cm)$) -- node[invisible] (Tanchor1) {} ($(T-1-12.south east) + (0,-.05cm)$);
\draw[-,very thick, colorD] ($(Q-1-11.north west) + (0,.05cm)$) -- node[invisible] (Qanchor2) {} ($(Q-1-13.north east) + (0,.05cm)$);
\draw[-,very thick, colorD] ($(T-1-11.north west) + (0,.05cm)$) -- node[invisible] (Tanchor2) {} ($(T-1-13.north east) + (0,.05cm)$);
\draw[-,colorA] (Qanchor0) to node[pos=0.6] (a1) {\backgroundcontour{$a_1$}} (Tanchor0);
\node[colorAdark] at (a1) {$a_1$};
\draw[-,colorC] (Qanchor1) to node[pos=0.4] (a2) {\backgroundcontour{$a_3$}} (Tanchor1);
\node[colorCdark] at (a2) {$a_3$};
\draw[-,colorD] (Qanchor2) to node[pos=0.6] (a3) {\backgroundcontour{$a_4$}} (Tanchor2);
\node[colorDdark] at (a3) {$a_4$};
\begin{scope}[on background layer]
\draw[draw=none,fill=black!30] ($(Q-1-7.north west)!0.5!(Q-1-6.north east)$) rectangle ($ (Q-1-8.south east)!0.5!(Q-1-7.south west) $);
\draw[draw=none,pattern={Lines[angle=45,distance=2pt,line width=1pt]},pattern color=black!30] ($(T-1-8.north west)!0.5!(T-1-7.north east)$) rectangle ($(T-1-8.south east)!0.5!(T-1-9.south west)$);
\draw[draw=none,pattern={Lines[angle=45,distance=2pt,line width=1pt]},pattern color=black!30] ($(Q-1-11.north west)!0.5!(Q-1-10.north east)$) rectangle ($(Q-1-10.south east)!0.5!(Q-1-11.south west)$);
\draw[draw=none,fill=black!30] ($(T-1-11.north west)!0.5!(T-1-10.north east)$) rectangle ($ (T-1-13.south east)!0.5!(T-1-12.south west) $);
\end{scope}
	\node[anchor=west] at (0,-1.5) (cost) {$0 + (\stripedunderline 1 + \solidunderline 1) + (0 + \solidunderline 2) + 0 = 4$};
	\node[left] at (Q.west) {$Q=$};
	\node[left] at (T.west) {$T=$};
	\node[left] at (cost.west) {$\gscost(A)=$};

\foreach \i in {1,7,10,11,13}
{
	\node[anchor=base] at ($ (Q-1-\i.north) + (0,2ex) $) {\scriptsize \i};
}
\foreach \i in {1,7,9,11,12,13}
{
	\node[anchor=base] at ($ (T-1-\i.south) - (0,2.5ex) $) {\scriptsize \i};
}
\end{tikzpicture}\caption{Suboptimal chain.}\label{fig:failing:suboptimal1}
\end{subfigure}\\[.5cm]
\begin{subfigure}{.5\textwidth}\centering%
\begin{tikzpicture}
\matrix[
	matrix of nodes,
	minimum size=2mm,
	column sep={3.5mm,between origins},
	row sep={3mm,between origins},
	inner sep=0pt,
	nodes={anchor=base},
	anchor=west,
	font=\ttfamily,
] (Q) {
A&A&A&A&A&A&T&G&T&C&T&C&C\\
};

\matrix[
	matrix of nodes,
	minimum size=2mm,
	column sep={3.5mm,between origins},
	row sep={3mm,between origins},
	inner sep=0pt,
	nodes={anchor=base},
	anchor=west,
	font=\ttfamily,
] (T) at (0,-.75) {
A&A&A&A&A&A&T&C&T&G&T&C&C\\
};

\draw[-,very thick, colorA] ($(Q-1-1.north west) + (0,.05cm)$) -- node[invisible] (Qanchor0) {} ($(Q-1-7.north east) + (0,.05cm)$);
\draw[-,very thick, colorA] ($(T-1-1.north west) + (0,.05cm)$) -- node[invisible] (Tanchor0) {} ($(T-1-7.north east) + (0,.05cm)$);
\draw[-,very thick, colorB] ($(Q-1-9.south west) + (0,-.05cm)$) -- node[invisible] (Qanchor1) {} ($(Q-1-12.south east) + (0,-.05cm)$);
\draw[-,very thick, colorB] ($(T-1-7.south west) + (0,-.05cm)$) -- node[invisible] (Tanchor1) {} ($(T-1-10.south east) + (0,-.05cm)$);
\draw[-,very thick, colorD] ($(Q-1-11.north west) + (0,.05cm)$) -- node[invisible] (Qanchor2) {} ($(Q-1-13.north east) + (0,.05cm)$);
\draw[-,very thick, colorD] ($(T-1-11.north west) + (0,.05cm)$) -- node[invisible] (Tanchor2) {} ($(T-1-13.north east) + (0,.05cm)$);
\draw[-,colorA] (Qanchor0) to node[pos=0.6] (a1) {\backgroundcontour{$a_1$}} (Tanchor0);
\node[colorAdark] at (a1) {$a_1$};
\draw[-,colorB] (Qanchor1) to node[pos=0.4] (a2) {\backgroundcontour{$a_2$}} (Tanchor1);
\node[colorBdark] at (a2) {$a_2$};
\draw[-,colorD] (Qanchor2) to node[pos=0.6] (a3) {\backgroundcontour{$a_4$}} (Tanchor2);
\node[colorDdark] at (a3) {$a_4$};
\begin{scope}[on background layer]
\draw[draw=none,pattern={Lines[angle=45,distance=2pt,line width=1pt]},pattern color=black!30] ($(Q-1-8.north west)!0.5!(Q-1-7.north east)$) rectangle ($(Q-1-8.south east)!0.5!(Q-1-9.south west)$);
\draw[draw=none,fill=black!30] ($(T-1-7.north west)!0.5!(T-1-6.north east)$) rectangle ($ (T-1-8.south east)!0.5!(T-1-7.south west) $);
\draw[draw=none,fill=black!30] ($(Q-1-11.north west)!0.5!(Q-1-10.north east)$) rectangle ($ (Q-1-13.south east)!0.5!(Q-1-12.south west) $);
\draw[draw=none,pattern={Lines[angle=45,distance=2pt,line width=1pt]},pattern color=black!30] ($(T-1-11.north west)!0.5!(T-1-10.north east)$) rectangle ($(T-1-10.south east)!0.5!(T-1-11.south west)$);
\end{scope}
	\node[anchor=west] at (0,-1.5) (cost) {$0 + (\stripedunderline 1 + \solidunderline 1) + (0 + \solidunderline 2) + 0 = 4$};

	\node[left] at (Q.west) {$Q=$};
	\node[left] at (T.west) {$T=$};
	\node[left] at (cost.west) {$\gscost(A)=$};
\foreach \i in {1,7,9,11,12,13}
{
	\node[anchor=base] at ($ (Q-1-\i.north) + (0,2ex) $) {\scriptsize \i};
}
\foreach \i in {1,7,10,11,13}
{
	\node[anchor=base] at ($ (T-1-\i.south) - (0,2.5ex) $) {\scriptsize \i};
}
\end{tikzpicture}\caption{Suboptimal chain.}\label{fig:failing:suboptimal2}
\end{subfigure}
\begin{subfigure}{.5\textwidth}\centering%
\begin{tikzpicture}
\matrix[
	matrix of nodes,
	minimum size=2mm,
	column sep={3.5mm,between origins},
	row sep={3mm,between origins},
	inner sep=0pt,
	nodes={anchor=base},
	anchor=west,
	font=\ttfamily,
] (Q) {
A&A&A&A&A&A&T&G&T&C&T&C&C\\
};

\matrix[
	matrix of nodes,
	minimum size=2mm,
	column sep={3.5mm,between origins},
	row sep={3mm,between origins},
	inner sep=0pt,
	nodes={anchor=base},
	anchor=west,
	font=\ttfamily,
] (T) at (0,-.75) {
A&A&A&A&A&A&T&C&T&G&T&C&C\\
};

\draw[-,very thick, colorA] ($(Q-1-1.north west) + (0,.05cm)$) -- node[invisible] (Qanchor0) {} ($(Q-1-7.north east) + (0,.05cm)$);
\draw[-,very thick, colorA] ($(T-1-1.north west) + (0,.05cm)$) -- node[invisible] (Tanchor0) {} ($(T-1-7.north east) + (0,.05cm)$);
\draw[-,very thick, colorD] ($(Q-1-11.south west) + (0,-.05cm)$) -- node[invisible] (Qanchor1) {} ($(Q-1-13.south east) + (0,-.05cm)$);
\draw[-,very thick, colorD] ($(T-1-11.south west) + (0,-.05cm)$) -- node[invisible] (Tanchor1) {} ($(T-1-13.south east) + (0,-.05cm)$);
\draw[-,colorA] (Qanchor0) to node[pos=0.6] (a1) {\backgroundcontour{$a_1$}} (Tanchor0);
\node[colorAdark] at (a1) {$a_1$};
\draw[-,colorD] (Qanchor1) to node[pos=0.4] (a2) {\backgroundcontour{$a_4$}} (Tanchor1);
\node[colorDdark] at (a2) {$a_4$};
\begin{scope}[on background layer]
\draw[draw=none,pattern={Lines[angle=45,distance=2pt,line width=1pt]},pattern color=black!30] ($(Q-1-8.north west)!0.5!(Q-1-7.north east)$) rectangle ($(Q-1-10.south east)!0.5!(Q-1-11.south west)$);
\draw[draw=none,pattern={Lines[angle=45,distance=2pt,line width=1pt]},pattern color=black!30] ($(T-1-8.north west)!0.5!(T-1-7.north east)$) rectangle ($(T-1-10.south east)!0.5!(T-1-11.south west)$);
\end{scope}
	\node[anchor=west] at (0,-1.5) (cost) {$0 + (\stripedunderline 3 + 0) + 0 = 3$};

	\node[left] at (Q.west) {$Q=$};
	\node[left] at (T.west) {$T=$};
	\node[left] at (cost.west) {$\gscost(A)=$};
\foreach \i in {1,7,11,13}
{
	\node[anchor=base] at ($ (Q-1-\i.north) + (0,2ex) $) {\scriptsize \i};
}
\foreach \i in {1,7,11,13}
{
	\node[anchor=base] at ($ (T-1-\i.south) - (0,2.5ex) $) {\scriptsize \i};
}
\end{tikzpicture}\caption{Optimal chain.}\label{fig:failing:optimal}
\end{subfigure}
\caption{Example where \texttt{\upshape ChainX} with $B_\mathrm{start} = 9$ returns a suboptimal chain.}\label{fig:failing}
\end{figure}

Note that substituting the condition of \Cref{eq:flawedcondition} with condition
$
    \qbeg^i - \qend^j - 1 \le B
$
considers correctly all anchors $a_j$ with gap cost in $Q$ of at most $B$ but is harder to compute efficiently, since $\mathcal{A}$ is sorted by $\qbeg$ and all anchor overlaps in $Q$ are also included: additionally sorting $\mathcal{A}$ by end position in $Q$ (i.e.\ $\qend$) would make it simpler to consider all anchor overlaps, but the total number of comparisons\footnote{While we do assume a uniform distribution of anchors, we do not assume anything about their length. Indeed, long MUM and MEM anchors are expected in the comparison of similar strings, intuitively resulting in many anchor overlaps.} could be $O(n^2)$.
On the other hand, when anchors of $\mathcal{A}$ are exact match anchors of a fixed constant length $k \in O(1)$ like in the case of minimizer anchors, using the simple condition
$
\qbeg^i - \qbeg^j \le B + k
$ does obtain optimality.
In the next section, we show complete the strategy of \texttt{\upshape ChainX} to obtain an optimal chain regardless of the anchors in input.

\section{Chaining with the diagonal distance}\label{sec:diagonal}
In the previous section, we showed that \texttt{\upshape ChainX} approximates a correct strategy---guessing an optimal chaining cost of $B$ and considering only anchor pairs $(a_j,a_i)$ of distance at most $B$ in $Q$---but sacrifices correctness (\Cref{obs:wrong}).
Consider how function $\connect$ (\Cref{prob:CLC}) is defined: 
given $a_j \prec a_i$, if $a_j$ and $a_i$ do not overlap in $Q$, in symbols $\qbeg^i - \qend^j > 0$, then we can replace the condition from \Cref{eq:flawedcondition} with $\qbeg^i - \qend^j - 1 \le B$.
On the other hand, if $a_j$ and $a_i$ do overlap in $Q$, then we do not want to consider all of these cases explicitly, as described at the end of the last section.
Keeping this in mind, we tentatively rewrite \Cref{eq:recursion} as
\begin{align}
    C[i] &=
    \min \big(
        \mathrm{gap}_Q(i),
        \mathrm{overlap}_Q(i)
        \big), \qquad \text{with} \label{eq:recursion2}\\
    \mathrm{gap}_Q(i) &=
    \min \Big\lbrace C[j] + \connect(a_j, a_i) \;\Big|\; j < i : \substack{a_j \prec a_i,\\
    0 \le \qbeg^i - \qend^j - 1 \le B,\\
    \connect(a_j,a_i) \le B} \Big\rbrace,\notag\\
    \mathrm{overlap}_Q(i) &=
    \min \Big\lbrace C[j] + \connect(a_j, a_i) \;\Big|\; j < i : \substack{a_j \prec a_i,\\ \qbeg^j \le \qbeg^i \le \qend^j,\\ \connect(a_j,a_i) \le B} \Big\rbrace, \notag
\end{align}
where $\mathrm{gap}_Q(i)$ considers all anchors $a_j$ with no overlap in $Q$, and $\mathrm{overlap}_Q(i)$ considers those with an overlap in $Q$.
In the example of \Cref{fig:anchors}, if $B = 11$ then $\mathrm{gap}_Q(6)$ considers $a_4$ and $a_3$ and $\mathrm{overlap}_Q(6)$ considers $a_5$.
To compute $\mathrm{gap}_Q(i)$ we use the original strategy of \texttt{\upshape ChainX}, whereas we devise an efficient way of computing $\mathrm{overlap}_Q(i)$ based on \emph{diagonal distance}.
This simple metric has been used in some chaining formulations (see \cite{sahlin2023survey}) but has not yet been connected to \Cref{prob:CLC}, to the best of our knowledge.
See also \Cref{fig:diagonal}.

\begin{lemma}\label{lem:diagonalcost}
    Given $Q,T \in \Sigma^+$, if $a = ([\qbeg..\qend],[\tbeg..\tend])$ is an exact match anchor between $Q$ and $T$, we define the \emph{diagonal} of $a$ as $\diag(a) = \qbeg - \tbeg$.
    Let $a = ([\qbeg..\qend],[\tbeg..\tend])$ and  $a' = ([\qbeg'..\qend'],[\tbeg'..\tend'])$ be anchors between $Q$ and $T$ such that $a \prec a'$ or $a \wprec a'$.
    Then:
\begin{itemize}[nosep]
    \item (gap-gap case) if $a$ and $a'$ do not overlap in $Q$ nor in $T$, in symbols $\qend < \qbeg'$ and $\tend < \tbeg'$, then $\connect(a, a') = \maxgap(a,a') = \max(\qbeg' - \qend - 1, \,\tbeg' - \tend - 1)$;
    \item in all other cases, $\connect(a, a') = \lvert \diag(a) - \diag(a') \rvert$, a value that we call \emph{diagonal distance}.
\end{itemize}
\end{lemma}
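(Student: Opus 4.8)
The plan is to reduce everything to the two \emph{signed gaps}
$g_Q := \qbeg' - \qend - 1$ and $g_T := \tbeg' - \tend - 1$, which are nonnegative exactly when $a,a'$ do not overlap in $Q$, resp.\ in $T$, and negative otherwise (with $-g_Q = \qend - \qbeg' + 1$ the overlap length in $Q$). With this notation the definition of $\connect$ from \Cref{prob:CLC} rewrites as $\connect(a,a') = \max(0, g_Q, g_T) + \bigl\lvert \max(0,-g_Q) - \max(0,-g_T)\bigr\rvert$, since $\maxgap(a,a') = \max(0,g_Q,g_T)$ and the two quantities inside the overlap absolute value are exactly $-g_Q$ and $-g_T$. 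Applying the exact match invariant (\Cref{obs:matchinvariant}) to $a$ gives $\diag(a) = \qbeg - \tbeg = \qend - \tend$, and to $a'$ gives $\diag(a') = \qbeg' - \tbeg'$; subtracting, $\diag(a) - \diag(a') = (\qend - \tend) - (\qbeg' - \tbeg') = g_T - g_Q$, so the claimed diagonal distance is $\lvert \diag(a) - \diag(a')\rvert = \lvert g_Q - g_T\rvert$. Finally, ``$a$ and $a'$ do not overlap in $Q$ nor in $T$'' translates to $g_Q \ge 0$ and $g_T \ge 0$, while ``all other cases'' is $g_Q < 0$ or $g_T < 0$.

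With this setup the statement follows from a short case analysis on the signs of $g_Q$ and $g_T$, using that $\connect(a,a')$, the condition $g_Q,g_T\ge 0$, and $\lvert\diag(a)-\diag(a')\rvert$ are all invariant under swapping the roles of $Q$ and $T$ (the last one because the swap negates both diagonals). First, when $g_Q \ge 0$ and $g_T \ge 0$, both $\max(0,-g_Q)$ and $\max(0,-g_T)$ vanish, so $\diffoverlap(a,a') = 0$ and $\connect(a,a') = \max(0,g_Q,g_T) = \max(g_Q,g_T) = \max(\qbeg'-\qend-1,\,\tbeg'-\tend-1)$, which is the first bullet. Otherwise, assume without loss of generality $g_Q \le g_T$; then ``$g_Q < 0$ or $g_T < 0$'' forces $g_Q < 0$, hence $\max(0,-g_Q) = -g_Q$ and $\max(0,g_Q,g_T) = \max(0,g_T)$. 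If $g_T \ge 0$ this gives $\connect(a,a') = g_T + \lvert -g_Q - 0\rvert = g_T - g_Q$; if $g_T < 0$ it gives $\connect(a,a') = 0 + \lvert -g_Q + g_T\rvert = g_T - g_Q$. In both sub-cases $g_Q \le g_T$ yields $\connect(a,a') = \lvert g_Q - g_T\rvert = \lvert \diag(a) - \diag(a')\rvert$, the second bullet.

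The argument is essentially bookkeeping once the substitution $g_Q, g_T$ is in place, so I do not expect a genuine obstacle; the only care needed is in correctly collapsing the nested $\max(0,\cdot)$ and $\lvert\cdot\rvert$ expressions, which is exactly what the sign case-split does, and in checking that the $Q \leftrightarrow T$ symmetry used for the reduction is legitimate. It is worth noting that the precedence hypothesis ($a \prec a'$ or $a \wprec a'$) plays no role in the algebra above and is only needed to fix the setting in which the lemma is later invoked; the equalities hold for arbitrary exact match anchors $a,a'$ (including the convention endpoints $\startanchor,\finalanchor$, which always fall in the gap-gap case).
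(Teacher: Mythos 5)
Your proof is correct and follows essentially the same route as the paper's: a case analysis on where $a$ and $a'$ overlap, reducing $\connect(a,a')$ either to the maximum of the two gaps or to $\lvert\diag(a)-\diag(a')\rvert$ via the exact-match invariant (\Cref{obs:matchinvariant}). Your signed-gap notation and the explicit $Q\leftrightarrow T$ symmetry merely compress the paper's four explicit cases (gap-gap, overlap-gap, gap-overlap, overlap-overlap) into three, and your closing observation that the precedence hypothesis plays no role in the algebra is accurate.
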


\begin{figure}[ht]
\centering
\begin{subfigure}{.5\textwidth}
\centering\scalebox{0.73}{\begin{tikzpicture}
	\tikzset{
		invisible/.style = {minimum size=0pt, inner sep=0pt}
	}

\matrix[
	matrix of nodes,
	minimum size=2mm,
	column sep={3.5mm,between origins},
	row sep={3.5mm,between origins},
	inner sep=1pt,
	nodes={anchor=base},
	anchor=south west,
	font=\ttfamily,
    ampersand replacement=\&
] (Q) at (0,0) {
A\&C\&A\&T\&C\&T\&G\&C\&C\&A\&A\&C\&A\&T\&A\&T\\
};

\matrix[
	matrix of nodes,
	minimum size=2mm,
	column sep={3.5mm,between origins},
	row sep={3.5mm,between origins},
	inner sep=1pt,
	nodes={anchor=base},
	anchor=north east,
	font=\ttfamily,
] (T) at (0,0) {
T\\C\\A\\T\\C\\C\\G\\G\\C\\C\\A\\T\\A\\C\\A\\A\\
};

\begin{scope}[on background layer]
\draw[black!15,step=3.5mm] (0,0) grid ($(Q-1-16.south east|-T-16-1.south east) + (1.1pt,-.01pt)$);
\end{scope}

\draw[-,very thick, colorA] ($(Q-1-2.north west) + (0,.05cm)$) -- node[invisible] (Qanchor0) {} ($(Q-1-5.north east) + (0,.05cm)$);
\draw[-,very thick, colorA] ($(T-2-1.north west) + (-.05cm,0)$) -- node[invisible] (Tanchor0) {} ($(T-5-1.south west) + (-.05cm,0)$);

\node[invisible] at ($(Q-1-2.west|-T-2-1.north) + (-1pt,1pt)$) (a0northwest) {};
\node[invisible] at ($(Q-1-5.west|-T-5-1.north) + (3.5mm,-3.5mm) + (-1pt,1pt)$) (a0southeast) {};
\draw[-,very thick,draw=colorA] (a0northwest) rectangle (a0southeast) node[invisible,pos=.5] (a0) {};

\draw[dashed,thick,-,colorA] (a0northwest) -- ($(Q-1-2.south west-|a0northwest)$);
\draw[dashed,thick,-,colorA] (a0southeast) -- ($(Q-1-5.south east-|a0southeast)$);
\draw[dashed,thick,-,colorA] (a0northwest) -- ($(T-2-1.north east|-a0northwest)$);
\draw[dashed,thick,-,colorA] (a0southeast) -- ($(T-5-1.south east|-a0southeast)$);

\draw[-,very thick, colorB] ($(Q-1-12.north west) + (0,.10cm)$) -- node[invisible] (Qanchor1) {} ($(Q-1-15.north east) + (0,.10cm)$);
\draw[-,very thick, colorB] ($(T-10-1.north west) + (-.10cm,0)$) -- node[invisible] (Tanchor1) {} ($(T-13-1.south west) + (-.10cm,0)$);

\node[invisible] at ($(Q-1-12.west|-T-10-1.north) + (-1pt,1pt)$) (a1northwest) {};
\node[invisible] at ($(Q-1-15.west|-T-13-1.north) + (3.5mm,-3.5mm) + (-1pt,1pt)$) (a1southeast) {};
\draw[-,very thick,draw=colorB] (a1northwest) rectangle (a1southeast) node[invisible,pos=.5] (a1) {};

\draw[dashed,thick,-,colorB] (a1northwest) -- ($(Q-1-12.south west-|a1northwest)$);
\draw[dashed,thick,-,colorB] (a1southeast) -- ($(Q-1-15.south east-|a1southeast)$);
\draw[dashed,thick,-,colorB] (a1northwest) -- ($(T-10-1.north east|-a1northwest)$);
\draw[dashed,thick,-,colorB] (a1southeast) -- ($(T-13-1.south east|-a1southeast)$);

\node[anchor=base,colorAdark] at (a0) {\backgroundcontour{$a$}};
\node[anchor=base,colorAdark] at (a0) {$a$};

\node[anchor=base,colorBdark] at (a1) {\backgroundcontour{$a'$}};
\node[anchor=base,colorBdark] at (a1) {$a'$};

\begin{scope}[on background layer]
\draw[draw=none,pattern={Lines[angle=45,distance=2pt,line width=1pt]},pattern color=black!30] ($(a0southeast)$) rectangle ($(a1northwest)$) node[pos=.5] (cost0) {};
\end{scope}
\node[align=center] at (cost0) {\backgroundcontour{$\connect(a,a')$} \\ \backgroundcontour{$=$} \\ \backgroundcontour{$6 + 0$}};
\node[align=center] at (cost0) {$\connect(a,a')$ \\ $=$ \\ $6 + 0$};

\begin{scope}[on background layer]
\draw[draw=none,pattern={Lines[angle=45,distance=2pt,line width=1pt]},pattern color=black!30] ($(Q-1-6.north west)!0.5!(Q-1-5.north east)$) rectangle ($(Q-1-11.south east)!0.5!(Q-1-12.south west)$);
\draw[draw=none,pattern={Lines[angle=45,distance=2pt,line width=1pt]},pattern color=black!30] ($(T-6-1.north west)$) rectangle ($(T-9-1.south east)$);
\end{scope}

\begin{scope}[on background layer]
\end{scope}
\node[anchor=base east,inner sep=0pt] at (Q.west) {$Q =\,$};
\node[anchor=base east,inner sep=0pt] at (T-1-1.base west) {$T =\,$};
\node at (Q.north east) {};
\node at (T.south west) {};

\foreach \i in {2,5,12,15}
{
	\node[anchor=base] at ($ (Q-1-\i.north) + (0,2ex) $) {\scriptsize \i};
}
\foreach \i in {2,5,10,13}
{
	\node[anchor=base] at ($ (T-\i-1.base west) - (2.5ex,0) $) {\scriptsize \i};
}
\end{tikzpicture}}
\caption{Gap-gap case.}
\end{subfigure}\begin{subfigure}{.5\textwidth}
\centering\scalebox{0.73}{\begin{tikzpicture}
	\tikzset{
		invisible/.style = {minimum size=0pt, inner sep=0pt}
	}

\matrix[
	matrix of nodes,
	minimum size=2mm,
	column sep={3.5mm,between origins},
	row sep={3.5mm,between origins},
	inner sep=1pt,
	nodes={anchor=base},
	anchor=south west,
	font=\ttfamily,
    ampersand replacement=\&
] (Q) at (0,0) {
A\&C\&A\&T\&C\&T\&G\&C\&C\&A\&A\&C\&A\&T\&A\&T\\
};

\matrix[
	matrix of nodes,
	minimum size=2mm,
	column sep={3.5mm,between origins},
	row sep={3.5mm,between origins},
	inner sep=1pt,
	nodes={anchor=base},
	anchor=north east,
	font=\ttfamily,
] (T) at (0,0) {
A\\C\\A\\T\\C\\T\\G\\C\\C\\A\\A\\C\\C\\C\\A\\A\\
};

\begin{scope}[on background layer]
\draw[black!15,step=3.5mm] (0,0) grid ($(Q-1-16.south east|-T-16-1.south east) + (1.1pt,-.01pt)$);
\end{scope}

\draw[-,very thick, colorA] ($(Q-1-2.north west) + (0,.05cm)$) -- node[invisible] (Qanchor0) {} ($(Q-1-10.north east) + (0,.05cm)$);
\draw[-,very thick, colorA] ($(T-2-1.north west) + (-.05cm,0)$) -- node[invisible] (Tanchor0) {} ($(T-10-1.south west) + (-.05cm,0)$);

\node[invisible] at ($(Q-1-2.west|-T-2-1.north) + (-1pt,1pt)$) (a0northwest) {};
\node[invisible] at ($(Q-1-10.west|-T-10-1.north) + (3.5mm,-3.5mm) + (-1pt,1pt)$) (a0southeast) {};
\draw[dashed] (a0northwest) -- ($(Q-1-16.east|-T-16-1.south) + (-1pt,1pt)$) node[pos=1,anchor=east] {$\diag(a) = 0$};
\draw[-,very thick,draw=colorA] (a0northwest) rectangle (a0southeast) node[invisible,pos=.5] (a0) {};

\draw[dashed,thick,-,colorA] (a0northwest) -- ($(Q-1-2.south west-|a0northwest)$);
\draw[dashed,thick,-,colorA] (a0southeast) -- ($(Q-1-10.south east-|a0southeast)$);
\draw[dashed,thick,-,colorA] (a0northwest) -- ($(T-2-1.north east|-a0northwest)$);
\draw[dashed,thick,-,colorA] (a0southeast) -- ($(T-10-1.south east|-a0southeast)$);

\draw[-,very thick, colorB] ($(Q-1-7.north west) + (0,.10cm)$) -- node[invisible] (Qanchor1) {} ($(Q-1-12.north east) + (0,.10cm)$);
\draw[-,very thick, colorB] ($(T-7-1.north west) + (-.10cm,0)$) -- node[invisible] (Tanchor1) {} ($(T-12-1.south west) + (-.10cm,0)$);

\node[invisible] at ($(Q-1-7.west|-T-7-1.north) + (-1pt,1pt)$) (a1northwest) {};
\node[invisible] at ($(Q-1-12.west|-T-12-1.north) + (3.5mm,-3.5mm) + (-1pt,1pt)$) (a1southeast) {};
\draw[-,very thick,draw=colorB] (a1northwest) rectangle (a1southeast) node[invisible,pos=.5] (a1) {};

\draw[dashed,thick,-,colorB] (a1northwest) -- ($(Q-1-7.south west-|a1northwest)$);
\draw[dashed,thick,-,colorB] (a1southeast) -- ($(Q-1-12.south east-|a1southeast)$);
\draw[dashed,thick,-,colorB] (a1northwest) -- ($(T-7-1.north east|-a1northwest)$);
\draw[dashed,thick,-,colorB] (a1southeast) -- ($(T-12-1.south east|-a1southeast)$);

\node[anchor=base,colorAdark] at (a0) {\backgroundcontour{$a$}};
\node[anchor=base,colorAdark] at (a0) {$a$};

\node[anchor=center,colorBdark] at (a1) {\backgroundcontour{$a'$}};
\node[anchor=center,colorBdark] at (a1) {$a'$};

\begin{scope}[on background layer]
\draw[draw=none,fill=black!30] ($(a0southeast)$) rectangle ($(a1northwest)$) node[pos=.5] (cost0) {};
\end{scope}
\node at (cost0) {\backgroundcontour{$\connect(a,a') = 0 + 0$}};
\node at (cost0) {$\connect(a,a') = 0 + 0$};

\begin{scope}[on background layer]
\draw[draw=none,fill=black!30] ($(Q-1-6.north west)!0.5!(Q-1-7.north east)$) rectangle ($(Q-1-10.south east)!0.5!(Q-1-11.south west)$);
\draw[draw=none,fill=black!30] ($(T-7-1.north west)$) rectangle ($(T-10-1.south east)$);
\end{scope}

\begin{scope}[on background layer]
\end{scope}

\node[anchor=base east,inner sep=0pt] at (Q.west) {$Q =\,$};
\node[anchor=base east,inner sep=0pt] at (T-1-1.base west) {$T =\,$};
\node at (Q.north east) {};
\node at (T.south west) {};

\foreach \i in {2,7,10,12}
{
	\node[anchor=base] at ($ (Q-1-\i.north) + (0,2ex) $) {\scriptsize \i};
}
\foreach \i in {2,7,10,12}
{
	\node[anchor=base] at ($ (T-\i-1.base west) - (2.5ex,0) $) {\scriptsize \i};
}
\end{tikzpicture}}
\caption{Perfect chain, overlap-overlap case.}
\end{subfigure}\\[.2cm]
\begin{subfigure}{.5\textwidth}
\centering\scalebox{0.73}{\begin{tikzpicture}
	\tikzset{
		invisible/.style = {minimum size=0pt, inner sep=0pt}
	}

\matrix[
	matrix of nodes,
	minimum size=2mm,
	column sep={3.5mm,between origins},
	row sep={3.5mm,between origins},
	inner sep=1pt,
	nodes={anchor=base},
	anchor=south west,
	font=\ttfamily,
    ampersand replacement=\&
] (Q) at (0,0) {
A\&C\&A\&T\&C\&A\&T\&A\&T\&A\&A\&G\&C\&A\&A\&T\\
};

\matrix[
	matrix of nodes,
	minimum size=2mm,
	column sep={3.5mm,between origins},
	row sep={3.5mm,between origins},
	inner sep=1pt,
	nodes={anchor=base},
	anchor=north east,
	font=\ttfamily,
] (T) at (0,0) {
A\\C\\A\\T\\C\\A\\A\\A\\T\\A\\T\\A\\A\\C\\A\\A\\
};

\begin{scope}[on background layer]
\draw[black!15,step=3.5mm] (0,0) grid ($(Q-1-16.south east|-T-16-1.south east) + (1.1pt,-.01pt)$);
\end{scope}

\draw[-,very thick, colorA] ($(Q-1-2.north west) + (0,.05cm)$) -- node[invisible] (Qanchor0) {} ($(Q-1-10.north east) + (0,.05cm)$);
\draw[-,very thick, colorA] ($(T-2-1.north west) + (-.05cm,0)$) -- node[invisible] (Tanchor0) {} ($(T-10-1.south west) + (-.05cm,0)$);

\node[invisible] at ($(Q-1-2.west|-T-2-1.north) + (-1pt,1pt)$) (a0northwest) {};
\node[invisible] at ($(Q-1-10.west|-T-10-1.north) + (3.5mm,-3.5mm) + (-1pt,1pt)$) (a0southeast) {};
\draw[dashed] (a0northwest) -- ($(Q-1-16.east|-T-16-1.south) + (-1pt,1pt)$) node[pos=1,anchor=south] {$\diag(a) = 0$};
\draw[-,very thick,draw=colorA] (a0northwest) rectangle (a0southeast) node[invisible,pos=.5] (a0) {};

\draw[dashed,thick,-,colorA] (a0northwest) -- ($(Q-1-2.south west-|a0northwest)$);
\draw[dashed,thick,-,colorA] (a0southeast) -- ($(Q-1-10.south east-|a0southeast)$);
\draw[dashed,thick,-,colorA] (a0northwest) -- ($(T-2-1.north east|-a0northwest)$);
\draw[dashed,thick,-,colorA] (a0southeast) -- ($(T-10-1.south east|-a0southeast)$);

\draw[-,very thick, colorB] ($(Q-1-6.north west) + (0,.10cm)$) -- node[invisible] (Qanchor1) {} ($(Q-1-11.north east) + (0,.10cm)$);
\draw[-,very thick, colorB] ($(T-8-1.north west) + (-.10cm,0)$) -- node[invisible] (Tanchor1) {} ($(T-13-1.south west) + (-.10cm,0)$);

\node[invisible] at ($(Q-1-6.west|-T-8-1.north) + (-1pt,1pt)$) (a1northwest) {};
\node[invisible] at ($(Q-1-11.west|-T-13-1.north) + (3.5mm,-3.5mm) + (-1pt,1pt)$) (a1southeast) {};
\draw[dashed] (a1northwest) -- ($(Q-1-14.east|-T-16-1.south) + (-1pt,1pt)$) node[pos=1,invisible] (diagonalbottom1) {};
\node[inner sep=0] at ($(diagonalbottom1) + (0,-1*3ex)$) {$\diag(a') = -2$};
\draw[-,very thick,draw=colorB] (a1northwest) rectangle (a1southeast) node[invisible,pos=.5] (a1) {};

\draw[dashed,thick,-,colorB] (a1northwest) -- ($(Q-1-6.south west-|a1northwest)$);
\draw[dashed,thick,-,colorB] (a1southeast) -- ($(Q-1-11.south east-|a1southeast)$);
\draw[dashed,thick,-,colorB] (a1northwest) -- ($(T-8-1.north east|-a1northwest)$);
\draw[dashed,thick,-,colorB] (a1southeast) -- ($(T-13-1.south east|-a1southeast)$);

\node[anchor=base,colorAdark] at (a0) {\backgroundcontour{$a$}};
\node[anchor=base,colorAdark] at (a0) {$a$};

\node[anchor=center,colorBdark] at (a1) {\backgroundcontour{$a'$}};
\node[anchor=center,colorBdark] at (a1) {$a'$};

\begin{scope}[on background layer]
\draw[draw=none,fill=black!30] ($(a0southeast)$) rectangle ($(a1northwest)$) node[pos=.5] (cost0) {};
\end{scope}
\node at (cost0) {\backgroundcontour{$\connect(a,a') = 0 + 2$}};
\node at (cost0) {$\connect(a,a') = 0 + 2$};

\begin{scope}[on background layer]
\draw[draw=none,fill=black!30] ($(Q-1-5.north west)!0.5!(Q-1-6.north east)$) rectangle ($(Q-1-10.south east)!0.5!(Q-1-11.south west)$);
\draw[draw=none,fill=black!30] ($(T-8-1.north west)$) rectangle ($(T-10-1.south east)$);
\end{scope}

\node[anchor=base east,inner sep=0pt] at (Q.west) {$Q =\,$};
\node[anchor=base east,inner sep=0pt] at (T-1-1.base west) {$T =\,$};
\node at (Q.north east) {};
\node at (T.south west) {};

\foreach \i in {2,6,10,11}
{
	\node[anchor=base] at ($ (Q-1-\i.north) + (0,2ex) $) {\scriptsize \i};
}
\foreach \i in {2,8,10,13}
{
	\node[anchor=base] at ($ (T-\i-1.base west) - (2.5ex,0) $) {\scriptsize \i};
}
\end{tikzpicture}}
\caption{Overlap-overlap case.}
\end{subfigure}%
\begin{subfigure}{.5\textwidth}
\centering\scalebox{0.73}{\begin{tikzpicture}
	\tikzset{
		invisible/.style = {minimum size=0pt, inner sep=0pt}
	}

\matrix[
	matrix of nodes,
	minimum size=2mm,
	column sep={3.5mm,between origins},
	row sep={3.5mm,between origins},
	inner sep=1pt,
	nodes={anchor=base},
	anchor=south west,
	font=\ttfamily,
    ampersand replacement=\&
] (Q) at (0,0) {
A\&C\&A\&T\&C\&T\&G\&C\&C\&A\&A\&C\&A\&T\&A\&T\\
};

\matrix[
	matrix of nodes,
	minimum size=2mm,
	column sep={3.5mm,between origins},
	row sep={3.5mm,between origins},
	inner sep=1pt,
	nodes={anchor=base},
	anchor=north east,
	font=\ttfamily,
] (T) at (0,0) {
A\\C\\A\\T\\C\\T\\G\\C\\C\\A\\A\\G\\C\\C\\A\\A\\
};

\begin{scope}[on background layer]
\draw[black!15,step=3.5mm] (0,0) grid ($(Q-1-16.south east|-T-16-1.south east) + (1.1pt,-.01pt)$);
\end{scope}

\draw[-,very thick, colorA] ($(Q-1-2.north west) + (0,.05cm)$) -- node[invisible] (Qanchor0) {} ($(Q-1-10.north east) + (0,.05cm)$);
\draw[-,very thick, colorA] ($(T-2-1.north west) + (-.05cm,0)$) -- node[invisible] (Tanchor0) {} ($(T-10-1.south west) + (-.05cm,0)$);

\node[invisible] at ($(Q-1-2.west|-T-2-1.north) + (-1pt,1pt)$) (a0northwest) {};
\node[invisible] at ($(Q-1-10.west|-T-10-1.north) + (3.5mm,-3.5mm) + (-1pt,1pt)$) (a0southeast) {};
\draw[dashed] (a0northwest) -- ($(Q-1-16.east|-T-16-1.south) + (-1pt,1pt)$) node[pos=1,anchor=south] {$\diag(a) = 0$};
\draw[-,very thick,draw=colorA] (a0northwest) rectangle (a0southeast) node[invisible,pos=.5] (a0) {};

\draw[dashed,thick,-,colorA] (a0northwest) -- ($(Q-1-2.south west-|a0northwest)$);
\draw[dashed,thick,-,colorA] (a0southeast) -- ($(Q-1-10.south east-|a0southeast)$);
\draw[dashed,thick,-,colorA] (a0northwest) -- ($(T-2-1.north east|-a0northwest)$);
\draw[dashed,thick,-,colorA] (a0southeast) -- ($(T-10-1.south east|-a0southeast)$);

\draw[-,very thick, colorB] ($(Q-1-7.north west) + (0,.10cm)$) -- node[invisible] (Qanchor1) {} ($(Q-1-11.north east) + (0,.10cm)$);
\draw[-,very thick, colorB] ($(T-12-1.north west) + (-.10cm,0)$) -- node[invisible] (Tanchor1) {} ($(T-16-1.south west) + (-.10cm,0)$);

\node[invisible] at ($(Q-1-7.west|-T-12-1.north) + (-1pt,1pt)$) (a1northwest) {};
\node[invisible] at ($(Q-1-11.west|-T-16-1.north) + (3.5mm,-3.5mm) + (-1pt,1pt)$) (a1southeast) {};
\draw[dashed] (a1northwest) -- ($(Q-1-11.east|-T-16-1.south) + (-1pt,1pt)$) node[pos=1,invisible] (diagonalbottom1) {};
\node[inner sep=0] at ($(diagonalbottom1) + (0,-1*3ex)$) {$\diag(a') = -5$};
\draw[-,very thick,draw=colorB] (a1northwest) rectangle (a1southeast) node[invisible,pos=.5] (a1) {};

\draw[dashed,thick,-,colorB] (a1northwest) -- ($(Q-1-7.south west-|a1northwest)$);
\draw[dashed,thick,-,colorB] (a1southeast) -- ($(Q-1-11.south east-|a1southeast)$);
\draw[dashed,thick,-,colorB] (a1northwest) -- ($(T-12-1.north east|-a1northwest)$);
\draw[dashed,thick,-,colorB] (a1southeast) -- ($(T-16-1.south east|-a1southeast)$);

\node[anchor=base,colorAdark] at (a0) {\backgroundcontour{$a$}};
\node[anchor=base,colorAdark] at (a0) {$a$};

\node[anchor=base,colorBdark] at (a1) {\backgroundcontour{$a'$}};
\node[anchor=base,colorBdark] at (a1) {$a'$};

\begin{scope}[on background layer]
\draw[draw=none,fill=black!30] ($(a0southeast)$) -- (a0southeast-|a1northwest) -- (a0southeast|-a1northwest) -- cycle;
\draw[draw=none,pattern={Lines[angle=45,distance=2pt,line width=1pt]},pattern color=black!30] ($(a1northwest)$) -- (a0southeast-|a1northwest) -- (a0southeast|-a1northwest) -- cycle;
\draw[draw=none] (a0southeast) -- (a1northwest) node[pos=.5] (cost0) {};
\end{scope}
\node at (cost0) {\backgroundcontour{$\mathrm{connect}(a_1,a_2) = 1 + 4$}};
\node at (cost0) {$\mathrm{connect}(a_1,a_2) = 1 + 4$};

\begin{scope}[on background layer]
\draw[draw=none,fill=black!30] ($(Q-1-6.north west)!0.5!(Q-1-7.north east)$) rectangle ($(Q-1-10.south east)!0.5!(Q-1-11.south west)$);
\draw[draw=none,pattern={Lines[angle=45,distance=2pt,line width=1pt]},pattern color=black!30] ($(T-11-1.north west)$) rectangle ($(T-11-1.south east)$);
\end{scope}

\node[anchor=base east,inner sep=0pt] at (Q.west) {$Q =\,$};
\node[anchor=base east,inner sep=0pt] at (T-1-1.base west) {$T =\,$};
\node at (Q.north east) {};
\node at (T.south west) {};

\foreach \i in {2,7,10,11}
{
	\node[anchor=base] at ($ (Q-1-\i.north) + (0,2ex) $) {\scriptsize \i};
}
\foreach \i in {2,10,12,16}
{
	\node[anchor=base] at ($ (T-\i-1.base west) - (2.5ex,0) $) {\scriptsize \i};
}
\end{tikzpicture}}
\caption{Overlap-gap case.}
\end{subfigure}
\caption{2D interpretation of exact match anchors and of function $\connect(a,a')$, where $a$ and $a'$ are represented as squares. The gap-gap case is represented with striped background, and the overlap in the other cases is represented with solid background.}
    \label{fig:diagonal}
\end{figure}

\begin{proof}
    From the exact match anchor definition, it follows that $\diag(a) = \qbeg - \tbeg = \qend - \tend$ (\Cref{obs:matchinvariant}).
    We can prove the thesis by considering all possible overlap cases for the $\connect$ function, knowing that $a \prec a'$ or $a \wprec a'$ by hypothesis:
\begin{itemize}[nosep]
    \item (gap-gap) $\connect(a,a') = \maxgap(a,a')$ directly from the definition of gap cost (\Cref{prob:CLC});
    \item (overlap-gap) if $a$ overlaps with $a'$ in $Q$ but not in $T$, in symbols $\qbeg' \le \qend \le \qend'$ and $\tend < \tbeg'$, we have that
    \begin{align*}
        \connect(a,a') &= \maxgap(a,a') + \diffoverlap(a,a') \\
        &= \tbeg' - \tend - 1 + \qend - \qbeg' + 1
        = (\qend - \tend) - (\qbeg' - \tbeg') \\
        &= \diag(a) - \diag(a')
        = \lvert \diag(a) - \diag(a') \rvert
    \end{align*}
    where the last equality follows due to the $\connect$ function being always greater than or equal to $0$ 
    (alternatively, note that $\diag(a) = \qend - \tend > \qbeg' - \tbeg' = \diag(a')$, since in the overlap-gap case $\qbeg' \le \qend$ and $\tend < \tbeg'$);
    \item (gap-overlap) symmetrically, if $a$ overlaps with $a'$ in $Q$ but not in $T$, we have that $\connect(a,a') = \diag(a') - \diag(a) > 0$;
    \item (overlap-overlap) if $a$ overlaps with $a'$ in both $Q$ and $T$, in symbols $\qbeg' \le \qend \le \qend'$ and $\tbeg' \le \tend \le \tend'$, we have that $\maxgap(a,a') = 0$ and
    \begin{align*}
        \connect(a,a') &= \diffoverlap(a,a')
        = \big\lvert (\qend - \qbeg' + 1) - (\tend - \tbeg' + 1) \big\rvert \\
        &= \big\lvert (\qend - \tend) - (\qbeg' - \tbeg') \big\rvert
        = \big\lvert \diag(a) - \diag(a') \,\big\rvert.
    \end{align*}
\end{itemize}
Thus, function $\connect$ can be described by only two cases: gap-gap and all other cases.
\end{proof}

Using \Cref{lem:diagonalcost}, we can rewrite $\mathrm{overlap}_Q(i)$ from \Cref{eq:recursion2} as
\begin{equation}
    \label{eq:overlap}
    \mathrm{overlap}_Q(i) =
    \min \Big\lbrace C[j] + \connect(a_j, a_i) \;\Big|\; j < i : \substack{a_j \prec a_i,\\ \qbeg^j \le \qbeg^i \le \qend^j,\\ \lvert \diag(a_j) - \diag(a_i) \rvert \le B} \Big\rbrace
\end{equation}
requiring us to consider only anchors $a_j \prec a_i$ such that: $(i)$ $j < i$ (recall that we assume anchors of $\mathcal{A}$ to be sorted by increasing order of $\qbeg$); $(ii)$ $a_j$ overlaps with $a_i$ in $Q$; and $(iii)$ the diagonal distance between $a_i$ and $a_j$ is at most $B$.
This can be obtained by preprocessing in linear time the diagonals of all anchors in $\mathcal{A}$ and putting the anchors in at most $n$ sorted buckets corresponding to their diagonals.

Our final optimization is obtained as follows: after $O(n \log n)$-time preprocessing, 
we can assume that for each diagonal $D$, there is at most one anchor $a_j$ such that $a_j$ overlaps $a_i$ in $Q$ and $\diag(a_j) = D$.

\begin{lemma}
    \label{assumption:diag}
    Given instance $\mathcal{A} = \lbrace a_1, \dots, a_n \rbrace$ of \Cref{prob:CLC} under \texttt{\upshape ChainX} precedence, we can obtain $\mathcal{A}'$ in $O(n \log n)$ time such that the anchored edit distance on $\mathcal{A}$ and $\mathcal{A}'$ coincide and $\mathcal{A}'$ is \emph{maximal under perfect chains}, defined as follows: for any anchors $a_i = ([\qbeg^i..\qend^i],[\tbeg^i..\tend^i]), a_j = ([\qbeg^j..\qend^j],[\tbeg^j..\tend^j]) \in \mathcal{A'}$ such that $a_i \chainxprec a_j$ and 
    $\connect(a_i,a_j) = 0$, we have that $\qend^i < \qbeg^j$ and $\tend^i < \tbeg^i$.
\end{lemma}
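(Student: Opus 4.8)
The plan is to reduce the whole statement to the \emph{set of supported character matches}. By \Cref{prob:aED} the anchored edit distance of an anchor set is determined solely by which matches $Q[x]=T[x-D]$ are supported, so two anchor sets that support the same matches have equal anchored edit distance; this is exactly the mechanism already used in \Cref{cor:maximality}. The first thing I would record is that a single anchor $a=([\qbeg..\qend],[\tbeg..\tend])$ supports precisely the matches on its own diagonal $D=\diag(a)$ whose $Q$-coordinate lies in $[\qbeg..\qend]$, which is immediate from \Cref{prob:aED} and \Cref{obs:matchinvariant}. Hence, identifying a match on diagonal $D$ with its $Q$-coordinate, the matches of $\mathcal{A}$ on diagonal $D$ form the finite union of integer intervals $S_D=\bigcup\{\,[\qbeg^{a}..\qend^{a}] : a\in\mathcal{A},\ \diag(a)=D\,\}$.

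Next I would build $\mathcal{A}'$ as follows: for every diagonal $D$ occurring in $\mathcal{A}$, write $S_D$ as the disjoint union of its maximal runs of consecutive positions $[x_1..y_1],\dots,[x_m..y_m]$ and put into $\mathcal{A}'$ the anchors $([x_\ell..y_\ell],[x_\ell-D..y_\ell-D])$ for $\ell\in[1..m]$. Two routine checks are needed. First, each such pair is a valid exact match anchor: every position of $[x_\ell..y_\ell]$ lies in $S_D$ and is therefore covered by a genuine anchor of $\mathcal{A}$ on diagonal $D$ certifying $Q[z]=T[z-D]$, and these certificates cover the whole run, so $Q[x_\ell..y_\ell]=T[x_\ell-D..y_\ell-D]$. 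Second, $\mathcal{A}'$ supports on each diagonal $D$ exactly the set $S_D$, so it supports the same matches as $\mathcal{A}$ and the anchored edit distances coincide; alternatively one can present $\mathcal{A}'$ as obtained from $\mathcal{A}$ by the split-and-merge operations of \Cref{cor:maximality}. For the time bound I would compute all diagonals in $O(n)$ time, sort the anchors by the pair $(\diag,\qbeg)$ in $O(n\log n)$ time, then sweep each diagonal group in increasing order of start position while coalescing overlapping or touching intervals, which is $O(n)$ in total; this construction also gives $\lvert\mathcal{A}'\rvert\le\lvert\mathcal{A}\rvert=n$.

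Finally I would check maximality under perfect chains. Let $a_i,a_j\in\mathcal{A}'$ satisfy $a_i\chainxprec a_j$ and $\connect(a_i,a_j)=0$; then $a_i\neq a_j$, and by \Cref{def:precedence} we already have $\qbeg^{i}<\qbeg^{j}$ and $\tbeg^{i}<\tbeg^{j}$, so it suffices to prove that $a_i$ and $a_j$ overlap in neither $Q$ nor $T$. Suppose they overlapped in at least one of the two sequences. Since $a_i\chainxprec a_j$ implies $a_i\prec a_j$, \Cref{lem:diagonalcost} applies, and we would be in its non-gap-gap branch, giving $\connect(a_i,a_j)=\lvert\diag(a_i)-\diag(a_j)\rvert$; together with $\connect(a_i,a_j)=0$ this forces $\diag(a_i)=\diag(a_j)=:D$. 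But distinct anchors of $\mathcal{A}'$ sharing the diagonal $D$ have, by construction, $Q$-intervals equal to distinct maximal runs of $S_D$, hence disjoint (indeed separated by a gap of at least one position, by maximality), contradicting the assumed overlap. Therefore $a_i$ and $a_j$ overlap in neither sequence, and since $\qbeg^{i}<\qbeg^{j}$ and $\tbeg^{i}<\tbeg^{j}$ this yields $\qend^{i}<\qbeg^{j}$ and $\tend^{i}<\tbeg^{j}$. The same disjointness of the same-diagonal anchors of $\mathcal{A}'$ simultaneously establishes the claim made just before the lemma --- that for each diagonal $D$ at most one anchor of $\mathcal{A}'$ overlaps a given query anchor $a_i$ in $Q$ --- which is what \Cref{eq:overlap} relies on.

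The main obstacle is the first step: collapsing a cluster of mutually overlapping anchors into one long anchor has no natural image at the level of $\chainxprec$-chains, so the invariance of the cost must be argued through the supported-matches characterization of \Cref{prob:aED} (equivalently, via \Cref{thm:edconnection} and the reasoning of \Cref{cor:maximality}) rather than by transforming chains directly. A minor point of care is that integer intervals such as $[1..3]$ and $[4..6]$ coalesce, so \emph{maximal run} must be read as a maximal block of consecutive supported positions. The remaining steps --- the $O(n\log n)$ bookkeeping and the short case analysis through \Cref{lem:diagonalcost} --- are routine. (I read the second displayed inequality in the statement as $\tend^{i}<\tbeg^{j}$.)
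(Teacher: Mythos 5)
Your proposal is correct and takes essentially the same route as the paper: both constructions amount to bucketing anchors by diagonal, sorting within each diagonal by start position in $Q$, coalescing overlapping or touching same-diagonal intervals into one anchor per maximal run, and justifying cost-invariance via the supported-matches argument of \Cref{cor:maximality}. Your version is somewhat more explicit (it verifies the maximality property through \Cref{lem:diagonalcost} and correctly reads the statement's $\tend^i < \tbeg^i$ as the intended $\tend^i < \tbeg^j$), but the underlying argument is the same.
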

\begin{proof}
    First, note that from \Cref{lem:diagonalcost} and \Cref{obs:matchinvariant} it follows that $\connect(a_i,a_j) = 0$ ($a_i \chainxprec a_j$), that is, $a_i$ and $a_j$ form a perfect chain, if and only if $\diag(a_i) = \diag(a_j)$ and $\qend  = \qbeg' - 1$ (gap-gap case) or $\qend > \qbeg' - 1$ (other cases, see \Cref{fig:diagonal} (b)).
    We construct $\mathcal{A'}$ by merging all anchors of $\mathcal{A}$ forming perfect chains as in \Cref{cor:maximality} in $O(n \log n)$ time, by first sorting $\mathcal{A}$ by start position $\qbeg$ in $Q$, then by stable sorting $\mathcal{A}$ by diagonal $\diag(a)$, and merging adjacent anchors with $\connect$ value equal to $0$.
    The procedure terminates and merging order does not matter, since the binary relation $\mathrm{R} \subseteq \mathcal{A} \times \mathcal{A}$ containing pairs $(a,a')$ such that $a,a'$ are part of a perfect chain of $\mathcal{A}$ is an equivalence.
    The correctness follows from \Cref{cor:maximality}.
\end{proof}

\begin{algorithm}[htp]
\KwIn{Anchors $\mathcal{A} = \lbrace a_1, \dots, a_n \rbrace$ between $Q \in \Sigma^+$ and $T \in \Sigma^+$ that are maximal under perfect chains (\Cref{assumption:diag}) and parameters $B_\mathrm{start}$ (initial guess), $\alpha > 1$ (ramp-up factor).}
\KwOut{Maximum cost $\gscost(A)$ of a chain $A = \overline{a}_1, \dots, \overline{a}_c$ under $\chainxprec$.}
Sort pairs $(\qbeg, i)$ and $(\qend, i)$, for $a_i = ([\qbeg..\qend], [\tbeg..\tend]) \in \mathcal{A}$, into array $\mathtt{A}[1..2n]$ by first component\;
Sort pairs $(\diag(a), a)$, for $a = ([\qbeg..\qend], [\tbeg..\tend]) \in \mathcal{A}$, into non-empty buckets $\mathcal{D}_1$, \dots, $\mathcal{D}_d$ by first component, with the buckets ordered by increasing $\diag(a)$, and let $\bucket(a) = f$ if anchor $a$ belongs to the $f$-th bucket, and $\diag(f) = \diag(a)$ for any anchor $a$ in bucket $\mathcal{D}_f$\;
$B \gets B_\mathrm{start}$\;
Initialize $\mathtt{C}[1..n+1]$ to values $+\infty$\tcc*{recursive values from \Cref{eq:recursion2}}
Initialize $\mathtt{D}[1..d]$ to values $\perp$\tcc*{$\mathtt{D}[f] = i$ if anchor $a_i \in \mathcal{D}_f$ is active}
\Repeat{$\mathtt{C}[n+1] \le B_\mathrm{last}$}{
    $\mathtt{C}[n+1] \gets +\infty$\;
    \For(\tcp*[f]{iterate over all startpoint $\qbeg$ and endpoints $\qend$}){$k \gets 1 \;\KwTo\; 2n$}{%
        $(q,i) \gets \mathtt{A}[k]$\;
        $([\qbeg..\qend], [\tbeg..\tend]) \gets a_i$\;
        \If(\tcp*[f]{if startpoint, update active anchor and compute $C[i]$}){$q = \qbeg$}{%
            $\mathtt{D}[\bucket(a_i)] \gets i$\;
            \uIf(\tcp*[f]{initial anchor $a_{0} = \startanchor$}){$\connect(a_0, a_i) \le B$}{%
                $\mathtt{C}[i] \gets \connect(a_0, a_i)$\;
            }\Else{
                $\mathtt{C}[i] \gets +\infty$\;
            }
            \For(\tcp*[f]{iterate over all endpoints $\qend^j$ at distance $\le B$}){$k' \gets i - 1 \;\KwDownTo\; 1$}{%
                $(q',j) \gets A[k']$\;
                $([\qbeg'..\qend'], [\tbeg'..\tend']) \gets a_j$\;
                \If(\tcp*[f]{check distance}){$\qbeg - \qend' > B$}{%
                    \KwBreak\;
                }
                \If{$q' = \qend'$ \text{and} $a_j \prec a_i$ \text{and} $\connect(a_j,a_i) \le B$}{%
                    $\mathtt{C}[i] \gets \min\big( \mathtt{C}[i], \mathtt{C}[j] + \connect(a_j,a_i) \big)$\tcc*{compute $\mathrm{gap}_Q(i)$}
                }
            }
            \For{$f \in [1..d] : \lvert \diag(f) - \diag(a_i) \rvert \le B$}{
                \If{$\mathtt{D}[f] \neq \perp$}{%
                    $\mathtt{C}[i] \gets \min\big( \mathtt{C}[i], \mathtt{C}[\mathtt{D}[f]] + \lvert \diag(f) - \diag(a_i) \rvert) \big)$\tcc*{compute $\mathrm{overlap}_Q(i)$}
                }
            }
            \If(\tcp*[f]{final anchor $a_{n+1} = \finalanchor$}){$\connect(a_i, a_{n+1}) \le B$}{
                $\mathtt{C}[n+1] \gets \min \big( \mathtt{C}[n+1], \mathtt{C}[i] + \connect(a_i,a_{n+1}) \big)$\;
            }
        }
        \If(\tcp*[f]{if endpoint, remove as active anchor}){$q = \qend$}{%
            $\bucket(a_i) \gets \perp$\;
        }
    }
    $B_\mathrm{last} \gets B$\;
    $B \gets B \cdot \alpha$\;
}

\Return{$\mathtt{C}[n+1]$}\;
\caption{Practical colinear chaining on sequences revisited.
\label{alg:2}}
\end{algorithm}

\begin{theorem}
    \label{thm:practicalCLCrevisited}
    Given anchors $\mathcal{A} = a_1, \dots, a_n$ between $Q \in \Sigma^+$ and $T \in \Sigma^+$ \Cref{alg:2} solves colinear chaining with overlap and gap costs (\Cref{prob:CLC}) under $\chainxprec$ on $\mathcal{A}$ in $O(\mathrm{OPT} \cdot n + n \log n)$ average-case time, assuming that $n \le \lvert Q \rvert$, the anchors are uniformly distributed in $Q$, and the anchor set is maximal under perfect chains.
\end{theorem}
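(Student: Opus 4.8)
The plan is to establish two things: (i) \emph{correctness} --- that \Cref{alg:2} returns the optimal colinear chaining cost under $\chainxprec$ --- and (ii) the claimed $O(\mathrm{OPT}\cdot n + n\log n)$ \emph{average-case running time}. For correctness I would first fix a value of $B$ and prove the invariant that, after the corresponding pass of the main loop, $\mathtt{C}[i]$ equals the minimum cost of a chain $\startanchor = a_0 \chainxprec a_{j_1} \chainxprec \dots \chainxprec a_i$ all of whose consecutive $\connect$-values are $\le B$ (and $+\infty$ if none exists), and analogously for $\mathtt{C}[n+1]$ and $\finalanchor$. This reduces to checking that \Cref{alg:2} faithfully evaluates the recursion \Cref{eq:recursion2}, restricted to $\chainxprec$, by induction on the processing order (by nondecreasing $\qbeg$, so that any $a_j$ with $\qbeg^j = \qbeg^i$ is never a predecessor, consistently with $\chainxprec$ requiring $\qbeg^j<\qbeg^i$). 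The $\mathrm{gap}_Q(i)$ term: since an $a_j$ not overlapping $a_i$ in $Q$ has $\connect(a_j,a_i)\ge \qbeg^i-\qend^j-1$, the backward scan over $\mathtt{A}$ may safely \texttt{break} once $\qbeg^i-\qend^j>B$, and the guard ``$q'=\qend'$ and $a_j\chainxprec a_i$ and $\connect(a_j,a_i)\le B$'' selects exactly the relevant endpoint entries. The $\mathrm{overlap}_Q(i)$ term: by \Cref{lem:diagonalcost}, whenever $a_j$ overlaps $a_i$ in $Q$ or $T$ and $a_j\chainxprec a_i$ we have $\connect(a_j,a_i)=\lvert\diag(a_j)-\diag(a_i)\rvert$; by \Cref{assumption:diag}, on each diagonal the anchors are pairwise separated by a positive $Q$-gap, so at any sweep position at most one anchor per diagonal is active, hence $\mathtt{D}[f]$ always records the unique active anchor on diagonal $f$, and the loop over buckets $f$ with $\lvert\diag(f)-\diag(a_i)\rvert\le B$ considers precisely the set in \Cref{eq:overlap}.

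Next I would handle the ramp-up. Every value ever stored in $\mathtt{C}$ corresponds to a genuine $\chainxprec$-chain and is therefore $\ge\mathrm{OPT}$ --- using the equivalence of the optimal cost under $\chainxprec$, $\prec$ and $\wprec$ (\Cref{sec:appendix}) --- so in particular $\mathtt{C}[n+1]>B$ whenever $B<\mathrm{OPT}$, the loop cannot terminate prematurely, and $B_\mathrm{last}\le\alpha\cdot\mathrm{OPT}$ provided $B_\mathrm{start}=O(\mathrm{OPT})$. Conversely, fix an optimal chain $A^\ast$; each of its consecutive $\connect$-values is $\le\mathrm{OPT}$, and once $B\ge\mathrm{OPT}$ \Cref{lem:diagonalcost} turns this bound into all the side conditions of \Cref{eq:recursion2} ($0\le\qbeg^i-\qend^j-1\le B$ in the gap case, $\lvert\diag(a_j)-\diag(a_i)\rvert\le B$ otherwise), so the restricted recursion still realizes $A^\ast$, giving $\mathtt{C}[n+1]=\mathrm{OPT}\le B$ and triggering termination. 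Hence \Cref{alg:2} returns exactly $\mathrm{OPT}$.

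For the running time, the preprocessing --- making $\mathcal{A}$ maximal under perfect chains (\Cref{assumption:diag}), sorting $\mathtt{A}$, and forming and sorting the diagonal buckets --- is $O(n\log n)$ and is done once. For a fixed $B$, one pass visits the $2n$ entries of $\mathtt{A}$, and at a startpoint entry for $a_i$ the work is $O(1)$ plus the backward scan plus the bucket scan. Under $n\le\lvert Q\rvert$ and $\Pr[\qbeg^j=x]=1/\lvert Q\rvert$, for any fixed interval $I$ we have $\Pr[\qbeg^j\in I]\le\lvert I\rvert/\lvert Q\rvert$; since $\qend^j=\qbeg^j+(\text{length of }a_j)-1$ and $\diag(a_j)=\qbeg^j-\tbeg^j$, the same bound $\le\lvert I\rvert/\lvert Q\rvert$ holds for $\Pr[\qend^j\in I]$ and for $\Pr[\diag(a_j)\in I]$, uniformly over the (arbitrary) lengths and $\tbeg^j$-values. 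Thus the backward scan touches $O(nB/\lvert Q\rvert)=O(B)$ array entries in expectation, and the bucket scan touches at most as many non-empty buckets as there are anchors with diagonal in $[\diag(a_i)-B,\diag(a_i)+B]$, again $O(B)$ in expectation. One pass therefore costs $O(nB)$ expected time; summing over the geometric sequence $B=B_\mathrm{start},B_\mathrm{start}\alpha,\dots,B_\mathrm{last}$ with $B_\mathrm{last}=O(\mathrm{OPT})$ gives $O(n\cdot\mathrm{OPT})$, and adding the preprocessing yields $O(n\cdot\mathrm{OPT}+n\log n)$.

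I expect the genuinely delicate points to be twofold. For correctness, the hard part is arguing that the diagonal shortcut in $\mathrm{overlap}_Q(i)$ neither underestimates (an active $Q$-overlapping anchor need not literally $\chainxprec$-precede $a_i$ when it overshoots $a_i$) nor misses a better predecessor --- this is where one must combine the $\prec\leftrightarrow\chainxprec$ equivalence with the sub-anchor trimming of \Cref{cor:maximality} and the one-active-anchor-per-diagonal property of maximal sets. For the running time, the subtle step is the expected-$O(B)$ bound on the backward scan: one must verify that long anchors whose startpoint entries lie far to the left --- and hence do not trigger the $\qbeg^i-\qend^j>B$ break --- still form an $O(B)$-size set in expectation, which is exactly the place where uniformity of the starting positions is used.
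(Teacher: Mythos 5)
Your proposal is correct and follows essentially the same route as the paper: correctness via the decomposition into $\mathrm{gap}_Q(i)$ and $\mathrm{overlap}_Q(i)$ from \Cref{eq:recursion2,eq:overlap} together with \Cref{lem:diagonalcost} and the one-active-anchor-per-diagonal property from \Cref{assumption:diag}, and the running time via the uniform-distribution expectation bound on the backward scan (the paper's $X_{j,q},Y_{j,q}$ indicator argument), an $O(B)$ bound on the bucket scan, and the geometric sum over the ramp-up values of $B$. The only slip is the claim that \emph{every} value stored in $\mathtt{C}$ is $\ge \mathrm{OPT}$ (false for intermediate $\mathtt{C}[i]$, which are prefix-chain costs), but your argument only uses this for $\mathtt{C}[n+1]$, where it does hold.
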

\begin{proof}
    The correctness follows from \Cref{eq:recursion2,eq:overlap}: the inner loop (lines 8--30) computes $C[i]$ by considering only and all anchors $a_j$ with $\connect(a_j,a_i) \le B$; in particular, lines 17--23 compute $\mathrm{gap}_Q(i)$ by considering all anchor startpoints and endpoints at distance at most $B$ in $Q$, and lines 24--26 compute $\mathrm{overlap}_Q(i)$ by considering all diagonals at distance at most $B$ from $\diag(a_i)$ in the sorted sequence of diagonals computed in the form of buckets (line 2).
    The algorithm maintains at most one active anchor per diagonal (lines 12 and 30) due to the maximality under perfect chains (\Cref{assumption:diag}): for any given startpoint $\qbeg$ and diagonal, there can be only one active anchor.

    The $O(n \log n)$ term comes from sorting and bucketing the anchors in lines 1--2.
    Each iteration of the main loop (lines 6--33) takes $O(n B)$ time: the handling of diagonals takes $O(n)$ time in total; at most $2B + 2$ diagonals are considered for each $C[i]$ in the computation of $\mathrm{gap}_Q(i)$; and the average-case analysis of \cite[Lemma 6]{JainGT22} also holds for lines 17--23.
    Indeed, let $Y_{j,q}$ be the indicator random variable equal to $1$ if anchor $a_j$ ends at position $q$ of $Q$, in symbols $\qend^j = q$, and $0$ otherwise.
    Then, $\mathbb{E}[Y_{j,q}] = 1 / \lvert Q \rvert$ for all $j \in [1..n]$ and $q \in [1..\lvert Q \rvert]$ due to the uniform distribution assumption.
    Similarly, if $X_{j,q}$ is defined as $Y_{j,q}$ but with condition $\qbeg^j = q$ (anchor $a_j$ starts at position $q$), $\mathbb{E}[X_{j,q}] = 1 / \lvert Q \rvert$.\footnote{The algorithm can be further engineered to avoid these comparisons and the use of $X_{j,q}$. However the average-case time complexity analysis stays the same.}
    If $Z^i$ is the number of anchor startpoints $\qbeg^j$ such that $\qbeg^i - B \le \qbeg^j \le \qbeg^i$ plus the number of endpoints $\qend^j$ with $\qbeg^i - B \le \qend^j \le \qbeg^i$, then
    $
        Z^i = \sum_{j=1}^{n} \sum_{q = \qbeg^i - B}^{\qbeg^i} \big( X_{j,q} + Y_{j,q} \big)
    $.
    Finally, if $Z$ is the total number of anchors processed in lines 17--23, then
    \begin{align*}
        \mathbb{E}(Z) &= \mathbb{E}\bigg(
            \sum_{i = 1}^{n} Z^i 
        \bigg)\\
        &= \sum_{i=1}^{n} \sum_{j=1}^{n}\sum_{q = \qbeg^i - B}^{\qbeg^i} \big( \mathbb{E}(X_{j,q}) + \mathbb{E}(Y_{j,q}) \big) &\text{linearity of expectation}\\
        &= \frac{n^2 \cdot 2(B+1)}{\lvert Q \rvert} &\mathbb{E}\big(X_{j,q}), \mathbb{E}\big(Y_{j,q}) = \frac{1}{\lvert Q \rvert} \\
        &\le 2n(B+1) &n \le \lvert Q \rvert
    \end{align*}
    and thus the loop 17-23 takes $O(nB)$ average time.
    The time complexity of the main computation (lines 6-34) is then
$
    O\Big(B_\mathrm{start} \cdot n \cdot \Big( 1 + \alpha + \alpha^2 + \dots + \alpha^{\lceil \log_\alpha \mathrm{OPT} \rceil} \Big)\!\Big) = O(n \cdot \mathrm{OPT}).
$
\end{proof}

Given anchor set $\mathcal{A}$, it takes $O(\mathrm{OPT}\cdot n + n \log n)$ total time to preprocess it as per \Cref{assumption:diag} and solve the chaining problem with \Cref{thm:practicalCLCrevisited}.
We leave the proof that chaining under strict precedence $\prec$ is equivalent to chaining under \texttt{\upshape ChainX} precedence $\chainxprec$, implying that the transformation of \Cref{assumption:diag} maintains the optimal chain cost under $\chainxprec$, to \Cref{sec:appendix} of the Appendix.

\section{Experiments}\label{sec:experiments}
We implemented \Cref{alg:2} and integrated it in a fork of the C++ tool \texttt{\upshape ChainX}, available at \url{https://github.com/algbio/ChainX}.
This provably correct version of \texttt{\upshape ChainX} can be invoked with flag \texttt{{-}{-}optimal}.
We replicated the original experiment of \texttt{\upshape ChainX} \cite[Table 2]{JainGT22} in \Cref{table:2} to compare \texttt{\upshape ChainX} to the \texttt{{-}{-}optimal} version, that we denote as \texttt{\upshape ChainX-opt}, on the University of Helsinki cluster \texttt{ukko}, limiting the task to 64 GB of memory, 4 cores, using the Lustre Vakka cluster filesystem (comparable to SSD performance).
Since the original parameters $B_\mathrm{start} = 100$ (initial guess) and $\alpha = 4$ (ramp-up factor) were hard-coded constants, we tested the variant of $\mathtt{ChainX\textnormal{-}opt}$ that we denote as \texttt{\upshape ChainX-opt*} using variable $B_\mathrm{start}$ equal to the maximum between $100$ and the inverse coverage of query $Q$ (number of bases that are not covered by any anchor) multiplied by $1.5$.
A similar optimization was previously introduced and used in \texttt{chainx-block-graph}~\cite{rizzo2024exploiting,SRFAligner}.
All costs of the output chains in the original experiments have the same exact value as reported by \texttt{\upshape ChainX-opt} (and \texttt{\upshape ChainX-opt*}), verifying the results of the work by Jain et al.
The implementation of \Cref{alg:2} is, naturally, slightly more computationally expensive, however the optimization introduced for \texttt{\upshape ChainX-opt*} appears to mitigate the additional computations.
We successfully replicated \cite[Table 1]{JainGT22} with analogous results in \Cref{table:1}.

\begin{table}[htp]
    \caption{Replication of~\cite[Table 2]{JainGT22}, using MUM seeds of length at least $20$. Label avg.\ iters refers to the average number of iterations of the main loop of each algorithm. All modes output a chain with the same cost, verifying the original results.}\label{table:2}
    \centering
    \setlength{\tabcolsep}{4pt}\begin{tabular}{c|r|r|r|r|r|r|r|r|r}
        \toprule
	\multirow{2}*{Similarity} & \multicolumn{3}{c}{\texttt{\upshape ChainX}} & \multicolumn{3}{c}{\texttt{\upshape ChainX-opt}} & \multicolumn{3}{c}{\texttt{\upshape ChainX-opt*}} \\
        \cmidrule(lr){2-4}
        \cmidrule(lr){5-7}
        \cmidrule(lr){8-10}
        & \scriptsize time (s) & \scriptsize space (MB) & \scriptsize avg.\ iters & \scriptsize time (s) & \scriptsize space (MB) & \scriptsize avg.\ iters & \scriptsize time (s) & \scriptsize space (MB) & \scriptsize avg.\ iters \normalsize \\
        \midrule
        \multicolumn{10}{l}{Semiglobal sequence comparison, sequence sizes $10^4$ (100 queries) and $5 \cdot 10^6$ (reference)} \\
        \midrule
        90--100\% & 0.89 & 57.816 & 3.87 & 0.91 & 57.592 & 3.87 & 0.87 & 57.788 & 1.00 \\
        80--90\%  & 0.98 & 57.672 & 5.00 & 0.98 & 57.824 & 5.00 & 1.02 & 57.604 & 1.00 \\
        75--80\%  & 1.01 & 57.464 & 5.00 & 0.98 & 57.808 & 5.00 & 1.03 & 57.820 & 1.00 \\
        \midrule
        \multicolumn{10}{l}{Global sequence comparison, sequence sizes $10^6$ (100 queries and reference)} \\
        \midrule
        90--100\% & 13.53 & 120.976 & 8.00 & 12.84 & 121.104 & 8.00 & 12.28 & 120.820 & 1.00 \\
        80--90\%  & 27.37 & 120.932 & 8.00 & 37.20 & 120.812 & 8.00 & 23.38 & 121.092 & 1.00 \\
        75--80\% & 13.53 & 120.976 & 8.00 & 12.84 & 121.104 & 8.00 & 12.28 & 120.820 & 1.00 \\
        \bottomrule
    \end{tabular}
\end{table}
\begin{table}[htp]
    \caption{Replication of~\cite[Table 1]{JainGT22}, using MUM seeds of length at least $20$. Label \emph{iters} refers to the number of iterations of the main loop of each algorithm. All modes output a chain with the same cost, verifying the original results.}\label{table:1}
    \centering
    \setlength{\tabcolsep}{4pt}\begin{tabular}{c|r|r|r|r|r|r|r|r|r|r}
        \toprule
        \multirow{2}{*}{Similarity} & \multirow{2}*{MUM count} & \multicolumn{3}{c}{\texttt{\upshape ChainX}} & \multicolumn{3}{c}{\texttt{\upshape ChainX-opt}} & \multicolumn{3}{c}{\texttt{\upshape ChainX-opt*}} \\
        \cmidrule(lr){3-5}
        \cmidrule(lr){6-8}
        \cmidrule(lr){9-11}
        && \scriptsize time (s) & \scriptsize space (MB) & \scriptsize iters & \scriptsize time (s) & \scriptsize space (MB) & \scriptsize iters & \scriptsize time (s) & \scriptsize space (MB) & \scriptsize iters \\
        \midrule
        \multicolumn{11}{l}{Semiglobal sequence comparison, sequence sizes $10^4$ (query) and $5 \cdot 10^6$ (reference)} \\
        \midrule
        97\% & 160 & 0.003 & 56.776 & 3 & 0.003 & 56.820 & 3 & 0.003 & 56.604 & 1 \\
        94\% & 176 & 0.003 & 56.336 & 4 & 0.003 & 56.616 & 4 & 0.003 & 56.592 & 1 \\
        90\% & 135 & 0.004 & 56.464 & 4 & 0.004 & 56.636 & 4 & 0.004 & 56.948 & 1 \\
        80\% &  28 & 0.004 & 56.628 & 5 & 0.004 & 56.820 & 5 & 0.004 & 56.636 & 1 \\
        70\% &   3 & 0.004 & 56.752 & 5 & 0.004 & 56.836 & 5 & 0.004 & 56.776 & 1 \\
        \midrule
        \multicolumn{11}{l}{Global sequence comparison, sequence sizes $10^6$ (query and reference)} \\
        \midrule
        99\% &    7012 & 0.038 & 23.264 & 5 & 0.047 & 23.032 & 5 & 0.039 & 23.284 & 1 \\
        97\% & 15\,862 & 0.381 & 23.588 & 7 & 0.624 & 23.652 & 7 & 0.248 & 23.580 & 1 \\
        94\% & 18\,389 & 0.514 & 24.056 & 7 & 0.829 & 24.444 & 7 & 0.652 & 24.140 & 1 \\
        90\% & 14\,472 & 0.671 & 23.580 & 8 & 1.060 & 23.652 & 8 & 0.562 & 23.528 & 1 \\
        80\% &  2\,964 & 0.138 & 23.248 & 8 & 0.158 & 23.024 & 8 & 0.132 & 23.104 & 1 \\
        70\% &     195 & 0.115 & 23.024 & 8 & 0.118 & 23.264 & 8 & 0.115 & 23.244 & 1 \\
        \bottomrule
    \end{tabular}
\end{table}

We additionally tested the semiglobal mode of \texttt{\upshape ChainX} and \texttt{\upshape ChainX-opt*} on the T2T-CHM13 reference~\cite{Nurk22T2T,rhie2023complete} and a sample of 100k PacBio HiFi (run m64004\footnote{Available at \url{https://s3-us-west-2.amazonaws.com/human-pangenomics/index.html?prefix=T2T/scratch/HG002/sequencing/hifi/}.}) reads used in the assembly of the HG002 reference~\cite{rhie2023complete}.
On this dataset, \texttt{\upshape ChainX} and \texttt{\upshape ChainX-opt*} took
1\,610
and
1\,631
seconds, respectively, 49 GB of memory, and had an average number of iterations of the main loop of
4.39
and
2.61, respectively. 
\texttt{\upshape ChainX-opt*} improved the optimal chaining cost of
2\,297
long reads out of 100\,000 (2.30 \%).
For these improved reads, the distribution of the absolute value improvement---the \texttt{\upshape ChainX-opt*} cost minus the \texttt{\upshape ChainX} cost---presents a minimum of $11$, first quartile of $835$, median of $1888$, third quartile of $4080$, and maximum of $6394$.
The relative improvement---\texttt{\upshape ChainX-opt*} cost divided by the \texttt{\upshape ChainX} cost---presents statistics of $1.04$ (minimum), $18.39$ (first quartile), $186.00$ (median), $663.86$ (third quartile), and $6293.00$ (maximum). These results show that cases where the \texttt{\upshape ChainX} solution fails occur in realistic data.

\section*{Acknowledgements}

This project has received funding from the European Union’s Horizon Europe research and innovation programme under grant agreement No.\ 101060011 (TeamPerMed) and from the Helsinki Institute for Information Technology (HIIT).

\bibliographystyle{plainurl}
\bibliography{biblio}

\begin{thebibliography}{10}

\bibitem{ChainX}
Chain{X}.
\newblock GitHub repository.
\newblock URL: \url{https://github.com/at-cg/ChainX}.

\bibitem{SRFAligner}
{SRFAligner}.
\newblock GitHub repository.
\newblock URL: \url{https://github.com/algbio/SRFAligner}.

\bibitem{ABVW15}
Amir Abboud, Arturs Backurs, and Virginia~Vassilevska Williams.
\newblock Tight hardness results for {LCS} and other sequence similarity
  measures.
\newblock In {\em {FOCS} 2015}, pages 59--78. {IEEE} Computer Society, 2015.
\newblock \href {https://doi.org/10.1109/FOCS.2015.14}
  {\path{doi:10.1109/FOCS.2015.14}}.

\bibitem{AO05}
Mohamed~Ibrahim Abouelhoda and Enno Ohlebusch.
\newblock Chaining algorithms for multiple genome comparison.
\newblock {\em Journal of Discrete Algorithms}, 3(2):321--341, 2005.
\newblock Combinatorial Pattern Matching (CPM) Special Issue.
\newblock URL:
  \url{https://www.sciencedirect.com/science/article/pii/S1570866704000589},
  \href {https://doi.org/10.1016/j.jda.2004.08.011}
  {\path{doi:10.1016/j.jda.2004.08.011}}.

\bibitem{BK15}
Karl Bringmann and Marvin K{\"{u}}nnemann.
\newblock Quadratic conditional lower bounds for string problems and dynamic
  time warping.
\newblock In {\em {FOCS} 2015}, pages 79--97. {IEEE} Computer Society, 2015.
\newblock \href {https://doi.org/10.1109/FOCS.2015.15}
  {\path{doi:10.1109/FOCS.2015.15}}.

\bibitem{ChandraJ23jcb}
Ghanshyam Chandra and Chirag Jain.
\newblock Gap-sensitive colinear chaining algorithms for acyclic pangenome
  graphs.
\newblock {\em Journal of Computational Biology}, 30(11):1182--1197, 2023.
\newblock \href {https://doi.org/10.1089/CMB.2023.0186}
  {\path{doi:10.1089/CMB.2023.0186}}.

\bibitem{koerkamp2022exact}
Ragnar Groot~Koerkamp and Pesho Ivanov.
\newblock Exact global alignment using {A}* with seed heuristic and match
  pruning.
\newblock {\em Bioinformatics}, 40(3):btae032, 01 2024.
\newblock \href {https://doi.org/10.1093/bioinformatics/btae032}
  {\path{doi:10.1093/bioinformatics/btae032}}.

\bibitem{JainGT22}
Chirag Jain, Daniel Gibney, and Sharma~V. Thankachan.
\newblock Algorithms for colinear chaining with overlaps and gap costs.
\newblock {\em J. Comput. Biol.}, 29(11):1237--1251, 2022.
\newblock \href {https://doi.org/10.1089/CMB.2022.0266}
  {\path{doi:10.1089/CMB.2022.0266}}.

\bibitem{Kuc19}
Gregory Kucherov.
\newblock Evolution of biosequence search algorithms: a brief survey.
\newblock {\em Bioinformatics}, 35(19):3547--3552, 04 2019.
\newblock \href {https://doi.org/10.1093/bioinformatics/btz272}
  {\path{doi:10.1093/bioinformatics/btz272}}.

\bibitem{Li18}
Heng Li.
\newblock Minimap2: pairwise alignment for nucleotide sequences.
\newblock {\em Bioinformatics}, 34(18):3094--3100, 2018.
\newblock \href {https://doi.org/10.1093/BIOINFORMATICS/BTY191}
  {\path{doi:10.1093/BIOINFORMATICS/BTY191}}.

\bibitem{LiFC20}
Heng Li, Xiaowen Feng, and Chong Chu.
\newblock The design and construction of reference pangenome graphs with
  minigraph.
\newblock {\em Genome Biology}, 21:265:1--265:19, 2020.
\newblock \href {https://doi.org/10.1186/s13059-020-02168-z}
  {\path{doi:10.1186/s13059-020-02168-z}}.

\bibitem{MaCSMT23bioinf}
Jun Ma, Manuel C{\'a}ceres, Leena Salmela, Veli M{\"a}kinen, and Alexandru~I.
  Tomescu.
\newblock Chaining for accurate alignment of erroneous long reads to acyclic
  variation graphs.
\newblock {\em Bioinformatics}, 39(8):btad460:1--btad460:10, 2023.
\newblock \href {https://doi.org/10.1093/bioinformatics/btad460}
  {\path{doi:10.1093/bioinformatics/btad460}}.

\bibitem{MS20}
Veli M{\"{a}}kinen and Kristoffer Sahlin.
\newblock Chaining with overlaps revisited.
\newblock In Inge~Li G{\o}rtz and Oren Weimann, editors, {\em {CPM} 2020},
  volume 161 of {\em LIPIcs}, pages 25:1--25:12. Schloss Dagstuhl -
  Leibniz-Zentrum f{\"{u}}r Informatik, 2020.
\newblock \href {https://doi.org/10.4230/LIPICS.CPM.2020.25}
  {\path{doi:10.4230/LIPICS.CPM.2020.25}}.

\bibitem{Marcais2018-hr}
Guillaume Mar{\c c}ais, Arthur~L Delcher, Adam~M Phillippy, Rachel Coston,
  Steven~L Salzberg, and Aleksey Zimin.
\newblock {MUMmer4}: A fast and versatile genome alignment system.
\newblock {\em PLoS Comput. Biol.}, 14(1):e1005944, January 2018.
\newblock \href {https://doi.org/10.1371/journal.pcbi.1005944}
  {\path{doi:10.1371/journal.pcbi.1005944}}.

\bibitem{MM95}
Gene Myers and Webb Miller.
\newblock Chaining multiple-alignment fragments in sub-quadratic time.
\newblock In {\em SODA '95}, page 38–47, USA, 1995. Society for Industrial
  and Applied Mathematics.

\bibitem{Nurk22T2T}
Sergey Nurk et~al.
\newblock The complete sequence of a human genome.
\newblock {\em Science}, 376(6588):44--53, 2022.
\newblock URL: \url{https://www.science.org/doi/abs/10.1126/science.abj6987},
  \href
  {https://arxiv.org/abs/https://www.science.org/doi/pdf/10.1126/science.abj6987}
  {\path{arXiv:https://www.science.org/doi/pdf/10.1126/science.abj6987}}, \href
  {https://doi.org/10.1126/science.abj6987}
  {\path{doi:10.1126/science.abj6987}}.

\bibitem{rhie2023complete}
Arang Rhie, Sergey Nurk, Monika Cechova, Savannah~J Hoyt, Dylan~J Taylor,
  Nicolas Altemose, Paul~W Hook, Sergey Koren, Mikko Rautiainen, Ivan~A
  Alexandrov, et~al.
\newblock The complete sequence of a human {Y} chromosome.
\newblock {\em Nature}, 621(7978):344--354, 2023.
\newblock \href {https://doi.org/10.1038/s41586-023-06457-y}
  {\path{doi:10.1038/s41586-023-06457-y}}.

\bibitem{RizzoCM23spire}
Nicola Rizzo, Manuel C{\'{a}}ceres, and Veli M{\"{a}}kinen.
\newblock Chaining of maximal exact matches in graphs.
\newblock In Franco~Maria Nardini, Nadia Pisanti, and Rossano Venturini,
  editors, {\em String Processing and Information Retrieval - 30th
  International Symposium, {SPIRE} 2023, Pisa, Italy, September 26-28, 2023,
  Proceedings}, volume 14240 of {\em Lecture Notes in Computer Science}, pages
  353--366. Springer, 2023.
\newblock \href {https://doi.org/10.1007/978-3-031-43980-3\_29}
  {\path{doi:10.1007/978-3-031-43980-3\_29}}.

\bibitem{rizzo2024exploiting}
Nicola Rizzo, Manuel C{\'a}ceres, and Veli M{\"a}kinen.
\newblock Exploiting uniqueness: seed-chain-extend alignment on elastic founder
  graphs.
\newblock In {\em {ISMB} 2025 Special Issue}. Bioinformatics, 2025.
\newblock bioRxiv 2024.11.24.625039. To appear.

\bibitem{sahlin2023survey}
Kristoffer Sahlin, Thomas Baudeau, Bastien Cazaux, and Camille Marchet.
\newblock A survey of mapping algorithms in the long-reads era.
\newblock {\em Genome Biology}, 24(1):133, 2023.

\bibitem{SK03}
Tetsuo Shibuya and Igor Kurochkin.
\newblock Match chaining algorithms for {cDNA} mapping.
\newblock In Gary Benson and Roderic D.~M. Page, editors, {\em Algorithms in
  Bioinformatics, Third International Workshop, {WABI} 2003, Budapest, Hungary,
  September 15-20, 2003, Proceedings}, volume 2812 of {\em Lecture Notes in
  Computer Science}, pages 462--475. Springer, 2003.
\newblock \href {https://doi.org/10.1007/978-3-540-39763-2\_33}
  {\path{doi:10.1007/978-3-540-39763-2\_33}}.

\end{thebibliography}

\clearpage
\appendix

\section{Full equivalence to anchored edit distance}
\label{sec:appendix}
After having introduced the diagonal distance and the corresponding lemma linking it to the $\connect$ function (\Cref{lem:diagonalcost}), we can prove that, for the goal of finding an optimal chain, strong anchor precedence $\chainxprec$ is equivalent to strict anchor precedence $\prec$ (see \Cref{def:precedence}) and thus the chaining formulation of \Cref{eq:recursion,eq:recursion2} connects to the anchored edit distance.
In turn, this would imply that the transformation from \Cref{assumption:diag} does not change the cost of an optimal chain under $\chainxprec$.

\begin{theorem}\label{thm:chainxprecequivalence}
    For a fixed set of anchors $\mathcal{A}$, the optimal colinear chaining cost under \texttt{\upshape ChainX} precedence ($\chainxprec$) equals the anchored edit distance and optimal colinear chaining cost under strict or weak precedence.
\end{theorem}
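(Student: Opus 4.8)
The plan is to derive the statement from \Cref{thm:edconnection}, which already identifies the anchored edit distance with both $\mathrm{OPT}_\prec$ and $\mathrm{OPT}_{\wprec}$ (the optimal chaining costs under $\prec$ and under $\wprec$); it therefore suffices to show that the optimal chaining cost under $\chainxprec$, denoted $\mathrm{OPT}_{\chainxprec}$, equals $\mathrm{OPT}_\prec$. One inequality is immediate: since $a \chainxprec a'$ implies $a \prec a'$, every chain under $\chainxprec$ is a chain under $\prec$ with the same cost, so $\mathrm{OPT}_{\chainxprec} \ge \mathrm{OPT}_\prec$. The work is the reverse inequality.

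To prove $\mathrm{OPT}_{\chainxprec} \le \mathrm{OPT}_\prec$, I would take an optimal chain $A = a_1, \dots, a_c$ under $\prec$ and turn it into a chain under $\chainxprec$ of no greater cost by repeatedly eliminating \emph{violations}: internal consecutive pairs $a_k \prec a_{k+1}$ that fail $a_k \chainxprec a_{k+1}$. The boundary pairs $\startanchor \chainxprec a_1$ and $a_c \chainxprec \finalanchor$ always hold, since $\startanchor = ([0..0],[0..0])$ and $\finalanchor$ lie strictly outside every anchor's index ranges, so only internal pairs can be violations. By \Cref{def:precedence}, a violation $a_k \prec a_{k+1}$ has an equality among its four defining inequalities; with \Cref{obs:matchinvariant}, any such equality forces $a_k$ and $a_{k+1}$ to overlap in $Q$ or in $T$, so \Cref{lem:diagonalcost} gives $\connect(a_k, a_{k+1}) = \lvert \diag(a_k) - \diag(a_{k+1}) \rvert$. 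I would then delete exactly one of the two anchors, reconnecting its neighbours in the chain (possibly $\startanchor$ or $\finalanchor$): delete $a_k$ when a starting coordinate is shared ($\qbeg^{a_k} = \qbeg^{a_{k+1}}$ or $\tbeg^{a_k} = \tbeg^{a_{k+1}}$), and delete $a_{k+1}$ otherwise, when necessarily an ending coordinate is shared ($\qend$ or $\tend$). Transitivity of $\prec$ keeps the sequence a chain under $\prec$, and each deletion strictly shortens it, so the process terminates at a chain with no violation, i.e.\ a valid chain under $\chainxprec$.

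The crux --- and the main obstacle --- is showing that each deletion does not increase $\gscost$; this is exactly why the deletion rule must choose the right anchor. One has to prove: if $b \prec a \prec a'$ with $a, a'$ sharing a starting coordinate (so they overlap and $\connect(a,a') = \lvert\diag(a)-\diag(a')\rvert$), then $\connect(b,a') \le \connect(b,a) + \connect(a,a')$; and the statement obtained by reversing $Q$, $T$ and the chain covers a shared ending coordinate (with $a'$ deleted). I would split on the overlap cases of $\connect$ from \Cref{lem:diagonalcost}: when $b$ overlaps both $a$ and $a'$, use the triangle inequality for $\lvert\diag(\cdot)-\diag(\cdot)\rvert$ together with the elementary bound $\lvert \diag(x)-\diag(y)\rvert \le \connect(x,y)$ (true in the gap--gap case because $\lvert\diag(x)-\diag(y)\rvert = \lvert(\qbeg^y-\qend^x-1)-(\tbeg^y-\tend^x-1)\rvert \le \maxgap(x,y)$, and an equality in the other cases); when the relinked pair $(b,a')$ falls in a gap--gap case, bound $\maxgap(b,a')$ directly, using that sharing the starting coordinate pins the $Q$- or $T$-endpoint controlling the relevant gap, so that gap grows by at most $\connect(a,a')$. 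That the wrong choice can strictly increase the cost --- so the rule is genuinely needed --- is witnessed by $\mathcal{A} = \{a_1,a_2,a_3\}$ over strings of length $10$ with $a_1 = ([1..3],[1..3])$, $a_2 = ([1..5],[3..7])$, $a_3 = ([6..8],[8..10])$: here $a_1 \prec a_2$ share $\qbeg$, the chain $a_1,a_2,a_3$ has cost $4$, deleting $a_1$ gives the $\chainxprec$-chain $a_2,a_3$ of cost $4$, whereas deleting $a_2$ gives $a_1,a_3$ of cost $6$ because $\connect(a_1,a_3)=4$. With the cost-non-increase established, the elimination procedure converts any optimal $\prec$-chain into a $\chainxprec$-chain of the same cost, giving $\mathrm{OPT}_{\chainxprec} = \mathrm{OPT}_\prec$, which by \Cref{thm:edconnection} equals the anchored edit distance (and $\mathrm{OPT}_{\wprec}$).
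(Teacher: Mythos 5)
Your proposal is correct and follows essentially the same route as the paper: reduce to \Cref{thm:edconnection} by converting an optimal $\prec$-chain into a $\chainxprec$-chain of no greater cost, deleting from each violating pair the anchor determined by which coordinate is shared (using the $Q$/$T$ and left-to-right symmetries of the problem), and establishing the local inequality $\connect(b,a') \le \connect(b,a) + \connect(a,a')$ via \Cref{lem:diagonalcost} with a gap--gap versus overlap case split. Your packaging of the overlap cases through the uniform bound $\lvert\diag(x)-\diag(y)\rvert \le \connect(x,y)$ is a slightly cleaner bookkeeping of the same case analysis the paper carries out explicitly.
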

\begin{proof}
    Recall that any chain under $\chainxprec$ is also a chain under $\prec$.
    The proof proceeds by applying \Cref{thm:edconnection} after proving that an optimal chain under strict precedence $\prec$ that \emph{is not} a chain under strong precedence $\chainxprec$ can be transformed into a chain under $\chainxprec$ of equal optimal cost (or even smaller cost, reaching a contradiction).
	Indeed, without loss of generality, this case occurs when an optimal chain $A = a_1 \cdots a_c$ is such that two adjacent anchors have the same starting position in $Q$, in symbols $\qbeg^i = \qbeg^{i+1}$ for $i \in [1..c-1]$, $a_{i} = ([\qbeg^i..\qend^i], [\tbeg^i..\tend^i])$, and $a_{i+1} = ([\qbeg^{i+1}..\qend^{i+1}], [\tbeg^{i+1}..\tend^{i+1}])$: since the colinear chaining problem definition (\Cref{prob:CLC}) is symmetrical with respect to both ($i$) the two input strings $Q$, $T$, and ($ii$) to the direction considered (the left-to-right precedence and cost definitions can be mirrored into equivalent right-to-left precedence and cost), it follows that an optimal chain under $\prec$ can be easily transformed into an optimal chain under $\chainxprec$.

    Then, let $A = a_1 \cdots a_c$ be a chain under $\prec$ but not under $\chainxprec$ with $\qbeg^i = \qbeg^{i+1}$ for some $a_{i} = ([\qbeg^i..\qend^i], [\tbeg^i..\tend^i])$ and $a_{i+1} = ([\qbeg^{i+1}..\qend^{i+1}], [\tbeg^{i+1}..\tend^{i+1}])$, with $i \in [1..c-1]$.
    Since $a_i \prec a_{i+1}$, we have that $(a)$ $\qbeg^i = \qbeg^{i+1}$, $(b)$ $\qend^i \le \qend^{i+1}$, $(c)$ $\tbeg^i \le \tbeg^{i+1}$, and $(d)$ $\tend^i \le \tend^{i+1}$.
    From $(a)$ we derive that $q_i$ and $q_{i+1}$ overlap in $Q$, and thus
\begin{align*}
    \connect(a_i, a_{i+1}) &=
    \lvert \diag(a_i) - \diag(a_{i+1}) \rvert &\text{\Cref{lem:diagonalcost}}\\
    &= \lvert (\qbeg^i - \tbeg^i) - (\qbeg^{i+1} - \tbeg^{i+1}) \rvert &\text{$\diag$ def.}\\
    &= \lvert \tbeg^{i+1} - \tbeg^i \rvert &(a) \\
    &= \tbeg^{i+1} - \tbeg^i \ge 0 &(c)
\end{align*}
    and for simplicity we call the non-negative integer quantity $\tbeg^{i+1} - \tbeg^i = \alpha$.
Then, we proceed to prove that $\connect(a_{i-1}, a_i) + \connect(a_{i}, a_{i+1}) = \connect(a_{i-1}, a_i) + \alpha  \ge \connect(a_{i-1}, a_{i+1})$ (i.e.\ skipping anchor $a_i$ does not increase the cost of the chain), with $a_{i-1} = ([\qbeg^{i-1}..\qend^{i}],[\tbeg^{i-1}..\tend^{i}])$. We analyze the two cases of \Cref{lem:diagonalcost} for anchor pair $a_{i-1}$, $a_i$:
\begin{itemize}
    \item (gap-gap case) if $a_{i-1}$ and $a_i$ do not overlap in neither $Q$ nor $T$, in symbols $\qend^{i-1} < \qbeg^{i}$ and $\tend^{i-1} < \tbeg^{i}$, then $a_{i-1}$, $a_{i+1}$ are also in the gap-gap case and
\begin{align*}
    \connect(a_{i-1}, a_{i}) + \alpha &= \maxgap(a_{i-1}, a_{i}) + \alpha &\text{\Cref{lem:diagonalcost}} \\
    &= \max\big( \qbeg^{i} - \qend^{i-1} - 1,\; \tbeg^{i} - \tend^{i-1} - 1 \big) + \alpha &\text{$\maxgap(\cdot)$ def.} \\
    &\ge \max\big( \qbeg^{i} - \qend^{i-1} - 1,\; \tbeg^{i} - \tend^{i-1} - 1  + \alpha \big) \\
    &= \max\big( \qbeg^{i} - \qend^{i-1} - 1,\; \tbeg^{i+1} - \tend^{i-1} - 1 \big) &\text{$\alpha$ def.} \\
    &= \max\big( \qbeg^{i+1} - \qend^{i-1} - 1,\; \tbeg^{i+1} - \tend^{i-1} - 1 \big) &(a) \\
    &= \maxgap(a_{i-1}, a_{i+1}) \\
    &= \connect(a_{i-1}, a_{i+1}) &\text{\Cref{lem:diagonalcost}};
\end{align*}

    \item (all other cases) if $a_{i-1}$ and $a_i$ overlap in $Q$ or in $T$, in symbols $\qend^{i-1} \ge \qbeg^i$ or $\tend^{i-1} \ge \tbeg^i$, then we need to consider the mutually exclusive cases of \Cref{lem:diagonalcost} for pair $a_{i-1}$, $a_{i+1}$:
    \begin{itemize}
        \item (gap-gap case) if $a_{i-1}$ and $a_{i+1}$ do not overlap neither in $Q$ nor in $T$, then $a_{i-1}$ and $a_{i}$ must overlap in $T$ but not in $Q$ (otherwise we would reach a contradiction), in symbols $\qend^{i-1} < \qbeg^{i} = \qbeg^{i+1}$ (due to $(a)$), $\tend^{i-1} \ge \tbeg^{i}$, and
\begin{align*}
    \connect(a_{i-1},a_{i}) + \alpha &=
    \maxgap(a_{i-1},a_{i}) + \diffoverlap(a_{i-1}, a_{i}) + \alpha &\text{$\connect$ def.} \\
    &= \qbeg^{i} - \qend^{i-1} - 1 + \diffoverlap(a_{i-1}, a_{i}) + \alpha &\text{gap-overlap case}\\
    &\ge \max(\qbeg^{i} - \qend^{i-1} - 1, \alpha) \\
    &= \max(\qbeg^{i+1} - \qend^{i-1} - 1, \alpha) &\text{$(a)$}\\
    &= \max(\qbeg^{i+1} - \qend^{i-1} - 1,\; \tbeg^{i+1} - \tbeg^{i}) &\text{$\alpha$ def.} \\
    &\ge \max(\qbeg^{i+1} - \qend^{i-1} - 1,\; \tbeg^{i+1} - \tend^{i-1}) &\tend^{i-1} \ge \tbeg^{i} \\
    &\ge \max(\qbeg^{i+1} - \qend^{i-1} - 1,\; \tbeg^{i+1} - \tend^{i-1} - 1) \\
    &= \maxgap(a_{i-1}, a_{i+1}) = \connect(a_{i-1},a_{i+1}) &\text{\Cref{lem:diagonalcost}};
\end{align*}
        \item (all other cases) if $a_{i-1}$ and $a_{i+1}$ overlap in $Q$ or in $T$ then
\begin{align*}
    \connect(a_{i-1}, a_i) + \alpha &=
    \lvert \diag(a_{i-1}) - \diag(a_{i}) \rvert + \lvert \diag(a_{i}) - \diag(a_{i+1}) \rvert &\text{\Cref{lem:diagonalcost}}\\
    &\ge \lvert \diag(a_{i-1}) - \diag(a_{i+1}) \rvert &\text{triangle ineq.}\\
    &= \connect(a_{i-1}, a_{i+1}) &\text{\Cref{lem:diagonalcost}},
\end{align*}
    completing the proof that skipping anchor $a_{i}$ does not increase the chaining cost $\gscost(A)$.\qedhere
\end{itemize}
\end{itemize}
\end{proof}

\Cref{thm:chainxprecequivalence} shows that the colinear chaining problem under \texttt{\upshape ChainX} precedence connects to anchored edit distance like the other formulations, thus the transformation of \Cref{assumption:diag} of the input anchors, making the anchor set maximal under perfect chains, does not change the cost of the optimal chain even under $\chainxprec$, obtaining the following corollary.
\begin{corollary}
\Cref{assumption:diag} and \Cref{alg:2} solve the anchored edit distance problem in $O(\mathrm{OPT} \cdot n + n \log n )$ average-case time, assuming that $n \le \lvert Q \rvert$ and the anchors are uniformly distributed in $Q$.
\end{corollary}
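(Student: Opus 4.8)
The proof plan is to compose three results already established: \Cref{thm:chainxprecequivalence}, which equates the optimal $\chainxprec$-chaining cost of an anchor set with its anchored edit distance; \Cref{assumption:diag}, which preprocesses $\mathcal{A}$ in $O(n\log n)$ time into a set $\mathcal{A}'$ of at most $n$ anchors that is maximal under perfect chains and has the same anchored edit distance as $\mathcal{A}$; and \Cref{thm:practicalCLCrevisited}, which states that \Cref{alg:2} computes the optimal $\chainxprec$-chaining cost in $O(\mathrm{OPT}\cdot n + n\log n)$ average-case time \emph{provided} its input is maximal under perfect chains.

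First I would run the transformation of \Cref{assumption:diag} on $\mathcal{A}$ to obtain $\mathcal{A}'$, then run \Cref{alg:2} on $\mathcal{A}'$. For correctness: by \Cref{thm:chainxprecequivalence} applied to $\mathcal{A}$ and to $\mathcal{A}'$ separately, the optimal $\chainxprec$-chaining cost coincides with the anchored edit distance for both sets, and by \Cref{assumption:diag} these two anchored edit distances are equal; hence the optimal $\chainxprec$-chaining cost of $\mathcal{A}'$ equals $\mathrm{OPT}$ and equals the anchored edit distance of $\mathcal{A}$. Since $\mathcal{A}'$ is maximal under perfect chains, \Cref{thm:practicalCLCrevisited} guarantees that \Cref{alg:2} returns exactly this value. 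For the running time, I would add the $O(n\log n)$ preprocessing of \Cref{assumption:diag} to the $O(\mathrm{OPT}\cdot n' + n'\log n')$ time of \Cref{alg:2} on $\mathcal{A}'$, where $n' = \lvert\mathcal{A}'\rvert \le n$; using $n' \le n \le \lvert Q\rvert$ (inherited from the hypothesis on $\mathcal{A}$) this is $O(\mathrm{OPT}\cdot n + n\log n)$.

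The step needing the most care — and the main obstacle — is justifying that the \emph{average-case} bound of \Cref{thm:practicalCLCrevisited} may be invoked on $\mathcal{A}'$: that theorem's analysis assumes the anchors are uniformly distributed in $Q$, whereas merging as in \Cref{assumption:diag} destroys both the independence and the uniformity of the positions appearing in $\mathcal{A}'$. I would resolve this by a domination argument rather than re-deriving a distribution for $\mathcal{A}'$. Every start position occurring in $\mathcal{A}'$ is the start position of some anchor of $\mathcal{A}$, and every end position occurring in $\mathcal{A}'$ is the end position of some anchor of $\mathcal{A}$ (merging $a\cdot a'$ keeps the start of $a$ and the end of $a'$). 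Consequently, writing $X_{j,q}$ and $Y_{j,q}$ for the start- and end-position indicator variables of the proof of \Cref{thm:practicalCLCrevisited} over the \emph{original} set $\mathcal{A}$ (each with expectation $1/\lvert Q\rvert$), the number $Z$ of anchors that the inner loop of \Cref{alg:2} (lines 17--23) processes when run on $\mathcal{A}'$ is bounded above by $\sum_{i=1}^{n}\sum_{j=1}^{n}\sum_{q=\qbeg^i-B}^{\qbeg^i}\bigl(X_{j,q}+Y_{j,q}\bigr)$, whose expectation is at most $2n(B+1)$ by the identical computation as in that proof. The remaining per-iteration work — the $O(n)$ bucket maintenance and the $O(B)$ diagonals scanned in the overlap computation for each $C[i]$ — depends only on $n' \le n$ and on $B$, so it transfers unchanged, and summing over the geometric sequence of guesses $B_\mathrm{start}, B_\mathrm{start}\alpha, \dots$ up to $\Theta(\mathrm{OPT})$ telescopes to $O(\mathrm{OPT}\cdot n)$, giving the claimed bound.
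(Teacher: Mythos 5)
Your proposal is correct and follows essentially the same route as the paper, which states this corollary without a separate proof, relying exactly on the composition of \Cref{thm:chainxprecequivalence}, \Cref{assumption:diag}, and \Cref{thm:practicalCLCrevisited} that you spell out. Your additional domination argument---bounding the work of lines 17--23 on the merged set $\mathcal{A}'$ by the indicator sums over the original uniformly distributed set $\mathcal{A}$, using the fact that every start (resp.\ end) position in $\mathcal{A}'$ is a start (resp.\ end) position of some anchor of $\mathcal{A}$---is sound and fills in a distributional subtlety that the paper leaves implicit.
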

\noindent An important future extension of \Cref{thm:edconnection,thm:chainxprecequivalence} is to the sequence-to-graph setting: even though some chaining formulations have been adapted~\cite{LiFC20,ChandraJ23jcb,MaCSMT23bioinf,RizzoCM23spire}, no algorithm maintains the connection to edit distance.
The dynamic-programming strategy of Jain et al.\ as revisited in this work motivates the future extension of chaining approaches based on overlaps and gaps.
\end{document}